\documentclass[12pt]{article}
\usepackage[a4paper, total={7in, 10in}]{geometry}
\usepackage{graphicx} 
\usepackage{amssymb,amsthm,latexsym,amsmath,epsfig,pgf,url}
\newtheorem{theorem}{Theorem}[section]
\newtheorem{lemma}[theorem]{Lemma}
\newtheorem{corollary}[theorem]{Corollary}
\newtheorem{example}[theorem]{Example}
\newtheorem{definition}{Definition}[section]
\newtheorem{remark}{Remark}
\newcommand{\Mod}[1]{\ (\mathrm{mod}\ #1)}
\usepackage{lineno}

\usepackage{authblk}

\title{Self-dual double cyclic codes over $\mathbb{F}_q$}
\author{Ricky Aditya$^{1}$, Aleams Barra$^{2}$\thanks{Corresponding author: Aleams Barra (aleamsbarra@itb.ac.id)}, Djoko Suprijanto$^{3}$\\
	\small $^{1}$Doctoral Program in Mathematics, Faculty of Mathematics and Natural Sciences,\\ Institut Teknologi Bandung, Indonesia 
	\\ $^{2}$Algebra Research Group, Faculty of Mathematics and Natural Sciences,\\ Institut Teknologi Bandung, Indonesia 
	\\
	$^{3}$Combinatorial Mathematics Research Group, Faculty of Mathematics and Natural Sciences,\\ Institut Teknologi Bandung, Indonesia 
}

\date{December 2025}

\begin{document}
	\maketitle
	\setcounter{MaxMatrixCols}{20}
	
	\abstract{This article focuses specifically on the study of self-dual double cyclic codes over a finite field $\mathbb{F}_q$. A self-dual double cyclic code is a double cyclic code that is equal to its dual. Structurally, a double cyclic code of length $(r,s)$ over $\mathbb{F}_q$ is a $\mathbb{F}_q[x]$-submodule of $\mathbb{F}_{q,r,s}:=\mathbb{F}_q[x]/\langle x^r-1\rangle\times\mathbb{F}_q[x]/\langle x^s-1\rangle$. Moreover, any double cyclic code of length $(r,s)$ over $\mathbb{F}_q$ is generated by two pairs of polynomials in $\mathbb{F}_{q,r,s}$. From the properties of the generating elements, we provide the necessary and sufficient conditions such that two pairs of polynomials in $\mathbb{F}_{q,r,s}$ generate a self-dual code. Furthermore, we examine the existence of self-dual double cyclic codes for some specific lengths: $(r,r)$; $(r,2r)$ and $(2r,r)$; and $(r,s)$, where $\gcd(r,s)=1$. For each case, we provide a construction method with some explicit examples over various finite fields. We also observe some connections between self-dual double cyclic codes and other classes of self-dual codes.}
	
	{\it Keywords: codes over finite fields, double cyclic codes, self-dual codes, self-dual double cyclic codes}
	
	\section{Introduction}
	In classical coding theory, cyclic codes are a special class of linear codes which have a strong algebraic structure that enables an efficient encoding and decoding process. The efficiency of cyclic codes makes them suitable for many practical applications that require high-speed data processing. The structure of a cyclic code of length $n$ over a finite field $\mathbb{F}_q$ can be identified with an ideal of the polynomial ring $\mathbb{F}_q[x]/\langle x^n-1\rangle$. Hence, the existence of cyclic codes of length $n$ over $\mathbb{F}_q$ depends on the factorization of $x^n-1$ over $\mathbb{F}_q$.
	
	Since the number of cyclic codes of a specific length is quite limited, a considerable amount of research on their generalizations has been carried out. Some of them are quasi-cyclic codes, negacyclic codes, constacyclic codes, skew-cyclic codes, and double cyclic codes. The goal is to obtain a wider variety of codes with good parameters while preserving the main structural properties. Some classes of optimal codes can be constructed using this approach.
	
	In this article, we focus on double cyclic codes. The concept of double cyclic codes was introduced by Borges et al. \cite{Borges2018}. In their article, they worked specifically on double cyclic codes over $\mathbb{Z}_2$. However, some of their results can easily be generalized to codes over any finite field $\mathbb{F}_q$. Later, Gao et al. \cite{Gao2016} worked on double cyclic codes over $\mathbb{Z}_4.$  Then several other works on double cyclic codes over various finite fields or finite rings have been done, such as over the finite field $\mathbb{F}_p$ \cite{Abualrub2024}, the finite ring $\mathbb{Z}_2+u\mathbb{Z}_2$, where $u^2=0$ \cite{Aydogdu2024}, and the finite ring $\mathbb{F}_q+v\mathbb{F}_q$, where $v^2=v$ \cite{Deng2020}. Furthermore, the concept of double skew-cyclic codes has also been developed \cite{Aydogdu2022}.
	
	On the other hand, self-dual codes are another special class of linear codes which have a unique and rich mathematical structure. They have deep connections to other mathematical fields that enable the construction of codes with desirable properties. There are several studies that focus on the self-dual case of a special class of codes, such as self-dual cyclic codes \cite{Jia2011,Zhang2024}, self-dual quasi-cyclic codes \cite{Ling2001,Choi2024}, self-dual negacyclic codes \cite{Guenda2012,Jitman2020}, and self-dual constacyclic codes \cite{Yang2015}.
	
	Therefore, the concept of self-dual double cyclic codes is interesting to explore. In this article, we discuss some aspects of self-dual double cyclic codes over a finite field $\mathbb{F}_q$, including their existence and their construction methods. We also explore some connections between self-dual double cyclic codes and other classes of self-dual codes, such as self-dual cyclic codes, self-dual quasi-cyclic codes, and self-dual negacyclic codes.
	
	This article is organized as follows. In Section 2 we recall some basic facts regarding the double cyclic codes over $\mathbb{F}_q$ and their dual codes. 
	The existence of self-dual double cyclic codes over $\mathbb{F}_q$ and their characterization is observed in Section 3. The 
	last three sections discuss self-dual double cyclic codes of a specific length. In Section 4 we examine a special case of self-dual double cyclic codes of length $(r,r)$ over $\mathbb{F}_q$. In Section 5, another special case of self-dual double cyclic codes of length $(r,2r)$ and $(2r,r)$ over $\mathbb{F}_q$ is examined. Then Section 6 is devoted to 
	self-dual double cyclic codes of length $(r,s)$ over $\mathbb{F}_q$, where $\gcd(r,s)=1$. The article ends with some concluding remarks.
	
	\section{Double cyclic codes over $\mathbb{F}_q$}
	Double cyclic codes are a generalization of classical cyclic codes. In \cite{Borges2018}, Borges et al. discussed the structure and properties of double cyclic codes over $\mathbb{Z}_2$, which can be generalized to double cyclic codes over any finite field $\mathbb{F}_q$. The definitions and theorems in this section are based on \cite{Borges2018}, with a generalization from codes over $\mathbb{Z}_2$ to codes over $\mathbb{F}_q$.
	
	In double cyclic codes, any codeword of length $n$, where $n=r+s$, is partitioned into two parts: the first $r$ coordinates and the last $s$ coordinates, and each part is invariant under cyclic shift. For the remainder of this article, the notation $(r,s)$ is used to indicate how the codewords are partitioned.
	
	\begin{definition}
		Let $C$ be a linear code of length $n=r+s$ over $\mathbb{F}_q$ and let
		$$u=(u_0,u_1,\cdots,u_{r-1}|u'_0,u'_1,\cdots,u'_{s-1})\in C$$
		be a codeword in $C$. The $(r,s)$-double cyclic shift of $u$, denoted by $\sigma_{r,s}(u)$, is defined as
		$$\sigma_{r,s}(u)=(u_{r-1},u_0,\cdots,u_{r-2}|u'_{s-1},u'_0,\cdots,u'_{s-2}).$$
		Moreover, $C$ is a double cyclic code of length $(r,s)$ if it is invariant under $(r,s)$-double cyclic shift, i.e., for every $u\in C$, $\sigma_{r,s}(u)\in C$.
	\end{definition}
	
	Define $\mathbb{F}_{q,r,s}=\mathbb{F}_q [x]/\langle x^r -1\rangle \times \mathbb{F}_q [x]/\langle x^s -1\rangle$. It is clear that $\mathbb{F}_{q,r,s}$ is an $\mathbb{F}_q [x]$-module with respect to pointwise polynomial addition and scalar multiplication defined as
	$$f(x) (p(x),q(x))=(f(x) p(x) \Mod{x^r-1},f(x) q(x) \Mod{x^s-1}),$$
	for every $f(x)\in \mathbb{F}_q [x]$ and for every $(p(x),q(x))\in\mathbb{F}_{q,r,s}$. If the codeword $$u=(u_0,u_1,\cdots,u_{r-1}|u'_0,u'_1,\cdots,u'_{s-1})$$ is associated with an ordered pair of polynomials
	$$(u(x),u'(x))=(u_0+u_1 x+\cdots+u_{r-1} x^{r-1}, u'_0+u'_1 x+\cdots+u'_{s-1} x^{s-1})\in \mathbb{F}_{q,r,s},$$
	then $\sigma_{r,s}(u)$ is associated with $$(x u(x) \Mod{x^r-1},x u'(x) \Mod{x^s-1})=x(u(x),u'(x)).$$
	Therefore, structurally a double cyclic code of length $(r,s)$ over $\mathbb{F}_q$ is an $\mathbb{F}_q [x]$-submodule of $\mathbb{F}_{q,r,s}$.
	
	In general, double cyclic codes are not the same as the direct product of two cyclic codes. The canonical projections of a double cyclic code $C$ of length $(r,s)$ over $\mathbb{F}_q$ are defined as
	\begin{align*}
		\pi_r(C)=\{p(x)\in \mathbb{F}_q [x]/\langle x^r -1\rangle:~ (p(x),q(x))\in C\},\\
		\pi_s(C)=\{q(x)\in \mathbb{F}_q [x]/\langle x^s -1\rangle:~ (p(x),q(x))\in C\}.
	\end{align*}
	It can be shown that $\pi_r(C)$ and
	$\pi_s(C)$ are $\mathbb{F}_q[x]$-submodules of $\mathbb{F}_q[x]/\langle x^r-1\rangle$  and $\mathbb{F}_q[x]/\langle x^s-1\rangle$, respectively. This means that both $\pi_r(C)$ and $\pi_s(C)$ are cyclic codes over $\mathbb{F}_q$ of length $r$ and $s$, respectively. However, in general $C$ and $\pi_r(C)\times \pi_s(C)$ are not always equal (see Example \ref{exsddc1} in the next section). A double cyclic code $C$ of length $(r,s)$ over $\mathbb{F}_q$ such that $C=\pi_r(C)\times \pi_s(C)$ is called a separable double cyclic code.
	
	As in classical cyclic codes, double cyclic codes also have generating elements. The following theorem gives important structural properties of the generating elements of double cyclic codes, which is a generalization of Theorem 3.1 in \cite{Borges2018} and Theorem 3 in \cite{Abualrub2024}.
	\begin{theorem}\label{gendouble}
		Every double cyclic code of length $(r,s)$ over $\mathbb{F}_q$ is generated by two elements $(b(x),0),(l(x),a(x))\in\mathbb{F}_{q,r,s}$, where $a(x),b(x),l(x)\in\mathbb{F}_q[x]$ such that $b(x)|(x^r-1)$ and $a(x)|(x^s-1)$.
	\end{theorem}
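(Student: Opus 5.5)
The plan is to use the canonical projection $\pi_s$ and its kernel, in the way one proves that submodules of a product of two cyclic modules are generated by two elements. Let $C$ be a double cyclic code of length $(r,s)$. Consider the $\mathbb{F}_q[x]$-module homomorphism $\pi_s: C\to \mathbb{F}_q[x]/\langle x^s-1\rangle$, $(p(x),q(x))\mapsto q(x)$. Its image $\pi_s(C)$ is an $\mathbb{F}_q[x]$-submodule of $\mathbb{F}_q[x]/\langle x^s-1\rangle$, i.e.\ an ideal of this principal ideal ring. By the standard structure of cyclic codes, $\pi_s(C)=\langle a(x)\rangle$ for a unique monic $a(x)$ with $a(x)\mid (x^s-1)$; take $a(x)$ to be the monic generator of the preimage ideal of $\pi_s(C)$ in $\mathbb{F}_q[x]$, which contains $x^s-1$. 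Then choose $l(x)\in\mathbb{F}_q[x]$ (of degree $<r$) with $(l(x),a(x))\in C$; this exists since $a(x)\in\pi_s(C)$.

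Next consider the kernel $K=\{(p(x),0)\in C\}$. It is an $\mathbb{F}_q[x]$-submodule of $C$, and $\{p(x): (p(x),0)\in K\}$ is an ideal of $\mathbb{F}_q[x]/\langle x^r-1\rangle$. Hence it equals $\langle b(x)\rangle$ with $b(x)$ monic and $b(x)\mid(x^r-1)$, by the same argument as for $a(x)$. If the kernel is zero, we take $b(x)=x^r-1$, which gives $(b(x),0)=(0,0)$; the same convention applies to $a(x)=x^s-1$.

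Finally, we show $C=\langle (b(x),0),(l(x),a(x))\rangle$. The inclusion $\supseteq$ is clear. For $\subseteq$, take $(p(x),q(x))\in C$. Since $q(x)\in\pi_s(C)=\langle a(x)\rangle$, write $q(x)=f(x)a(x)$ modulo $x^s-1$. Then
\[
(p(x),q(x))-f(x)(l(x),a(x))=(p(x)-f(x)l(x),0)\in K,
\]
so $p(x)-f(x)l(x)=g(x)b(x)$ modulo $x^r-1$ for some $g(x)$. Therefore $(p(x),q(x))=g(x)(b(x),0)+f(x)(l(x),a(x))$.

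The only step needing care is the identification of submodules of $\mathbb{F}_q[x]/\langle x^n-1\rangle$ with ideals generated by divisors of $x^n-1$. This follows from $\mathbb{F}_q[x]$ being a PID: such an ideal lifts to an ideal of $\mathbb{F}_q[x]$ containing $x^n-1$, and its monic generator therefore divides $x^n-1$. Everything else is a direct computation.
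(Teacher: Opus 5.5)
Your proof is correct and is essentially the argument the paper relies on. The paper gives no proof of its own and defers to Theorem 3.1 of Borges et al., while its remark after the theorem describes exactly your construction: $a(x)$ generates $\pi_s(C)$, $b(x)$ generates $\{p(x): (p(x),0)\in C\}$, and $l(x)$ is any polynomial with $(l(x),a(x))\in C$.
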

	
	Theorem \ref{gendouble} above shows that double cyclic codes are finitely generated submodules (to be precise, generated by two elements). Here, $a(x)$ is a generator polynomial for $\pi_s(C)$, $b(x)$ is a generator polynomial for $C'=\{p(x)\in \mathbb{F}_q [x]/\langle x^r -1\rangle:~ (p(x),0)\in C\}$, and $l(x)$ is a polynomial in $\mathbb{F}_q[x]/\langle x^r-1\rangle$ such that $(l(x),a(x))\in C$. Moreover, from Proposition 3.5 in \cite{Borges2018}, we can assume that $\deg(l(x))<\deg(b(x))$, and from Proposition 3.6 in \cite{Borges2018}, we have $b(x)$ divides $\frac{x^s-1}{a(x)}l(x)$. If $C$ is a separable double cyclic code, then $\pi_r(C)=C'$ and $l(x)=0$.
	
	Furthermore, it can be shown that the minimal generating set of a double cyclic code of length $(r,s)$ over $\mathbb{F}_q$ generated by $(b(x),0)$ and $(l(x),a(x))$ is $$\{(b(x),0),x(b(x),0),\cdots,x^{r-\beta-1}(b(x),0),(l(x),a(x)),x(l(x),a(x)),\cdots,x^{s-\alpha-1}(l(x),a(x))\},$$ where $\alpha=\deg(a(x))$ and $\beta=\deg(b(x))$ (see \cite{Abualrub2024}, \cite{Aydogdu2024}, \cite{Borges2018}, \cite{Deng2020}, \cite{Gao2016}). Thus, as a linear code over $\mathbb{F}_q$, its dimension is $r+s-\deg(b(x))-\deg(a(x))$.
	
	Let $C$ be a double cyclic code of length $(r,s)$ over $\mathbb{F}_q$.
	As in linear codes in general, the dual code $C^{\perp}$ is defined as the orthogonal complement of $C$ in $\mathbb{F}_q^n$, where $n=r+s$. Here, for any $x\in C$, elements of $C^{\perp}$ are orthogonal to $x$ and all its $(r,s)$-double cyclic shifts. Since a double cyclic code of length $(r,s)$ is an $\mathbb{F}_q[x]$-submodule of $\mathbb{F}_{q,r,s}$, its orthogonal complement is also an $\mathbb{F}_q[x]$-submodule of $\mathbb{F}_{q,r,s}$. Thus, the dual code of a double cyclic code is also a double cyclic code.
	
	To derive the properties of the dual of double cyclic codes, we need to characterize the orthogonality of two ordered pairs of polynomials in $\mathbb{F}_{q,r,s}$. For any polynomial $f(x)$ of degree $k$ over $\mathbb{F}_q$, the reciprocal polynomial of $f(x)$ is defined as $f^{*}(x)= x^{k} f(x^{-1})$. It is clear that $f^{**}(x)=f(x)$, and $f(x)$ is said to be self-reciprocal if $f^{*}(x)=f(x)$. Let $m=\operatorname{lcm}(r,s)$, define an operation $\circ:\mathbb{F}_{q,r,s}\times \mathbb{F}_{q,r,s} \rightarrow \mathbb{F}_q[x]/\langle x^m -1\rangle$ as:
	\begin{align*}
		(u(x),u'(x))\circ (v(x),v'(x)) & =\left[u(x) v^{*}(x) \Mod{x^r-1}\right]\left(\frac{x^m-1}{x^r-1}\right) \\
		&\quad + \left[u'(x) v'^{*}(x)\Mod{x^s -1}\right]\left(\frac{x^m-1}{x^s-1}\right),
	\end{align*}
	for every $(u(x),u'(x)),(v(x),v'(x))\in \mathbb{F}_{q,r,s}$.
	
	In \cite{Aydogdu2024}, Aydogdu showed that $(u(x),u'(x))\circ (v(x),v'(x))=0$ is equivalent to $(u,u')$ being orthogonal to $(v,v')$ and all of its $(r,s)$-double cyclic shifts. This will be used to determine the generating elements of the dual code of double cyclic codes.
	
	Let a double cyclic code $C$ be generated by $(b(x),0)$ and $(l(x),a(x))$, and the dual code $C^{\perp}$ be generated by $(\bar{b}(x),0)$ and $(\bar{l}(x),\bar{a}(x))$. The following theorem provides the relations among $(b(x),0)$, $(l(x),a(x))$, $(\bar{b}(x),0)$, and $(\bar{l}(x),\bar{a}(x))$, which is a generalization of Proposition 4.12, Proposition 4.13, and Proposition 4.16 in \cite{Borges2018}.
	\begin{theorem}\label{gendual}
		If $C=\langle(b(x),0),(l(x),a(x))\rangle$ is a double cyclic code over $\mathbb{F}_q$ and $C^{\perp}=\langle(\bar{b}(x),0),(\bar{l}(x),\bar{a}(x))\rangle$ is its dual code, then
		\begin{align*}
			\bar{b}^{*}(x)&=\frac{x^r-1}{\gcd(b(x),l(x))}, \quad\bar{l}^{*}(x)=\gamma(x)\frac{x^r-1}{b(x)}, \\
			\text{ and }&\quad\bar{a}^{*}(x)=\frac{\gcd(b(x),l(x))(x^s-1)}{b(x) a(x)},&
		\end{align*}
		for some $\gamma(x)\in\mathbb{F}_q[x]$.
	\end{theorem}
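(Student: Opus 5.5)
We determine the three generator polynomials of $C^{\perp}$ by identifying the two canonical pieces of $C^{\perp}$ (its first-coordinate kernel and its second projection), each via a dimension count, and then read off $\bar{l}$ from the orthogonality condition $\circ$.

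\textbf{Step 1: $\bar{b}$.} By Theorem \ref{gendouble} applied to $C^{\perp}$, $\bar b(x)$ generates $(C^{\perp})'=\{p(x):(p(x),0)\in C^{\perp}\}$. A pair $(p,0)$ lies in $C^{\perp}$ iff it is orthogonal to $\pi_r(C)$ (the second coordinates do not contribute). So $(C^{\perp})'$ is the ordinary dual of the cyclic code $\pi_r(C)$. Since $\pi_r(C)$ is generated by $b(x)$ and $l(x)$ modulo $x^r-1$, and $b\mid x^r-1$, its generator polynomial is $g(x)=\gcd(b(x),l(x))$. The classical description of the dual of a cyclic code then gives a generator $\left((x^r-1)/g\right)^{*}$, up to a scalar. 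Hence $\bar b^{*}(x)=(x^r-1)/\gcd(b,l)$.

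\textbf{Step 2: $\bar{a}$.} We use dimensions. We have $\dim C^{\perp}=r+s-\dim C=\deg b+\deg a$. Also $\dim C^{\perp}=r+s-\deg\bar b-\deg\bar a$. By Step 1, $\deg\bar b=r-\deg g$, so
\[
\deg\bar a = s+\deg g-\deg a-\deg b .
\]
This matches the degree of $\frac{g(x)(x^s-1)}{b(x)a(x)}$. That quotient is a genuine polynomial dividing $x^s-1$: since $b\mid \frac{x^s-1}{a}l$ and $b\mid \frac{x^s-1}{a}b$, we get $b\mid\frac{x^s-1}{a}g$.

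It therefore suffices to exhibit divisibility. Let $h^{*}(x)=\frac{g(x)(x^s-1)}{b(x)a(x)}$, i.e.\ $h=\left(\frac{g(x^s-1)}{ba}\right)^{*}$. We show $h$ divides $\bar a$, or the reverse, and then compare degrees.

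Concretely, $\bar a$ generates $\pi_s(C^{\perp})$. Take any $(\bar l,\bar a)\in C^{\perp}$ and apply $\circ$ with $(b,0)$ and with $(l,a)$:
\[
b\,\bar l^{*}\equiv 0 \pmod{x^r-1},\qquad \big[l\bar l^{*}\big]\tfrac{x^m-1}{x^r-1}+\big[a\bar a^{*}\big]\tfrac{x^m-1}{x^s-1}\equiv0 \pmod{x^m-1}.
\]
This takes care to interpret $\circ$ as acting on the first argument of $C$ and the reciprocal of $C^{\perp}$ elements, which is harmless by symmetry of orthogonality. The first congruence forces $\frac{x^r-1}{b}\mid \bar l^{*}$, which is already the claimed form of $\bar l^{*}$ with some $\gamma(x)$.

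Substituting $\bar l^{*}=\gamma\frac{x^r-1}{b}$ into the second congruence, we get
\[
(x^m-1)\left(\frac{l\gamma}{b}\right)+a\bar a^{*}\frac{x^m-1}{x^s-1}\equiv 0 \pmod{x^m-1},
\]
modulo reductions. This says $\frac{x^s-1}{a}$ divides $\bar a^{*}$ times something, and more precisely $\frac{(x^s-1)b}{a\,g}$-type divisibility constraints. We would extract that $\frac{g(x^s-1)}{ba}$ divides $\bar a^{*}$, or alternatively construct an explicit element of $C^{\perp}$ whose second coordinate is $\left(\frac{g(x^s-1)}{ba}\right)^{*}$ times a unit. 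Either route combined with the degree equality gives $\bar a^{*}=\frac{g(x^s-1)}{ba}$ up to a scalar, which we normalize.

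\textbf{Main obstacle.} The delicate step is this last one: carrying reductions modulo $x^r-1$ and $x^s-1$ inside the $\operatorname{lcm}$ modulus $x^m-1$, and tracking reciprocals, whose degrees are taken before reduction. I would first try the constructive direction. Write $l=g\,l_1$ and $b=g\,b_1$ with $\gcd(l_1,b_1)=1$, choose $\gamma$ with $\gamma l_1\equiv -1\pmod{b_1}$, and verify directly that the pair with $\bar l^{*}=\gamma\frac{x^r-1}{b}$ and $\bar a^{*}=\frac{g(x^s-1)}{ba}$ is orthogonal to both generators of $C$. This shows $\pi_s(C^{\perp})$ contains a polynomial of the predicted degree, so $\deg\bar a$ is at most that degree. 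The dimension count of Step 2 then forces equality.
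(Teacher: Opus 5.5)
Your proof is correct, but it determines $\bar a$ differently from the paper.

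\textbf{How the paper does it.} The paper never counts dimensions. With $g=\gcd(b,l)$, it eliminates the first coordinate by forming $-\frac{l}{g}(b,0)+\frac{b}{g}(l,a)=\bigl(0,\frac{ba}{g}\bigr)\in C$. Orthogonality of this element to $C^{\perp}$ gives $h\mid\bar a^{*}$ with $h=\frac{g(x^s-1)}{ba}$, which is a lower bound on $\deg\bar a$. It then speaks of ``the polynomial of smallest degree''. Its closing B\'ezout remark, that some $\gamma$ with $b\mid l\gamma+g$ exists, is what implicitly shows this bound is attained.

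\textbf{How yours differs.} You replace the lower-bound half by the exact count $\deg\bar a=s+\deg g-\deg a-\deg b$, which comes from $\dim C^{\perp}=r+s-\dim C$ and $0\to C'\to C\to\pi_s(C)\to 0$. So only the constructive half is needed. It puts $h^{*}$ in $\pi_s(C^{\perp})$, hence $\bar a\mid h^{*}$, and equal degrees make them associates. Your Step~1 is the paper's first-and-third-condition argument in cleaner form, and the $\bar l$ part is identical.

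\textbf{Your abandoned middle paragraph.} Either half combined with your dimension count suffices, so that paragraph can be completed. Multiplying your congruence by $b/g$ kills the first term modulo $x^m-1$ and leaves $x^s-1\mid\frac{b}{g}a\bar a^{*}$, i.e.\ $h\mid\bar a^{*}$. Note that the relevant quotient is $\frac{g(x^s-1)}{ab}$, not $\frac{(x^s-1)b}{ag}$. This is exactly the paper's device, and with your count it finishes $\bar a$ with no construction at all.

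\textbf{What each route buys.} Yours turns the paper's informal ``smallest degree'' step into an honest degree equality. The paper's, by pushing the fourth orthogonality condition through for the actual generator, also records the congruence $b\mid l\gamma+\gcd(b,l)$. The proof of Theorem~\ref{charsddc} later relies on that congruence.

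\textbf{The obstacle you flagged.} It is genuine, and the paper's proof passes over it. The criterion for $(u,u')$ to be orthogonal to all shifts of $(v,v')$ is unchanged if $v(x^{-1})$ and $v'(x^{-1})$ are multiplied by the same power $x^{N}$, but not by different powers. So reciprocals normalized by their individual degrees cannot be combined blindly in $\circ$. Here it is harmless. Take $\bar l(x)=x^{N}\bigl(\gamma\frac{x^r-1}{b}\bigr)(x^{-1})$ and $\bar a(x)=x^{N}h(x^{-1})$ with one common $N$, reduced modulo $x^r-1$ and $x^s-1$ respectively. Your verification goes through after factoring out $x^{-N}$. The second coordinate obtained is $x^{N-\deg h}h^{*}$, a unit multiple of $h^{*}$, which is enough because $\pi_s(C^{\perp})$ is an ideal.
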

	\begin{proof}
		From the definition of dual codes, each element of $C$ is orthogonal to each element of $C^{\perp}$. Therefore, we have
		\begin{itemize}
			\item $(b(x),0)\circ (\bar{b}(x),0)=0 \Leftrightarrow b(x) \bar{b}^{*}(x) \Mod{x^r-1}=0$,
			\item $(b(x),0)\circ (\bar{l}(x),\bar{a}^{*}(x))=0 \Leftrightarrow b(x) \bar{l}^{*}(x) \Mod{x^r-1}=0$,
			\item $(l(x),a(x))\circ (\bar{b}(x),0)=0 \Leftrightarrow l(x) \bar{b}^{*}(x) \Mod{x^r-1}=0$,
			\item $(l(x),a(x))\circ (\bar{l}(x),\bar{a}^{*}(x))=0 \Leftrightarrow \left[l(x) \bar{l}^{*}(x) \Mod{x^r-1}\right]\left(\frac{x^m-1}{x^r-1}\right)\\
			+ \left[a(x) \bar{a}^{*}(x)\Mod{x^s -1}\right]\left(\frac{x^m-1}{x^s-1}\right)=0$.
		\end{itemize}
		From the first and third conditions, we have that $x^r-1$ divides both $b(x)\bar{b}^{*}(x)$ and $l(x)\bar{b}^{*}(x)$. Therefore, $x^r-1$ divides $\gcd(b(x),l(x))\bar{b}^{*}(x)$. Since $b(x)$ divides $x^r-1$, if $\bar{b}(x)$ is the polynomial of the smallest degree that satisfies both conditions, then $$\gcd(b(x),l(x))\bar{b}^{*}(x)=x^r-1\Leftrightarrow \bar{b}^{*}(x)=\frac{x^r-1}{\gcd(b(x),l(x))}.$$
		Since $(b(x),0)$ and $(l(x),a(x))$ generate $C$, the pair of polynomials $$-\frac{l(x)}{\gcd(b(x),l(x))}(b(x),0)+\frac{b(x)}{\gcd(b(x),l(x))}(l(x),a(x))=\left(0,\frac{b(x)a(x)}{\gcd(b(x),l(x))}\right)$$ are in $C$. Thus, it is orthogonal to $(\bar{l}(x),\bar{a}(x))$, and we obtain $\frac{b(x)a(x)\bar{a}^{*}(x)}{\gcd(b(x),l(x))}\Mod{x^s-1}=0$. Therefore, $x^s-1$ divides $\frac{b(x)a(x)\bar{a}^{*}(x)}{\gcd(b(x),l(x))}$. Since $a(x)$ divides $x^s-1$ and $b(x)$ divides $\frac{x^s-1}{a(x)}l(x)$, we have $\frac{b(x)}{\gcd(b(x),l(x))}$ divides $\frac{x^s-1}{a(x)}\frac{l(x)}{\gcd(b(x),l(x))}$. Since $\frac{b(x)}{\gcd(b(x),l(x))}$ and $\frac{l(x)}{\gcd(b(x),l(x))}$ are relatively prime, $\frac{b(x)}{\gcd(b(x),l(x))}$ must divide $\frac{x^s-1}{a(x)}$. If $\bar{a}(x)$ is the polynomial of the smallest degree that satisfies that condition, then $$\frac{b(x)a(x)\bar{a}^{*}(x)}{\gcd(b(x),l(x))}=x^s-1\Leftrightarrow \bar{a}^{*}(x)=\frac{\gcd(b(x),l(x))(x^s-1)}{b(x)a(x)}.$$
		From the second condition we have $x^r-1$ divides $b(x)\bar{l}^{*}(x)$. Since $b(x)$ divides $x^r-1$, $\frac{x^r-1}{b(x)}$ divides $\bar{l}^{*}(x)$, i.e., $\bar{l}^{*}(x)=\gamma(x)\frac{x^r-1}{b(x)}$, for some $\gamma(x)\in\mathbb{F}_q[x]$. Moreover, the fourth condition becomes
		\begin{align*}
			&\left[l(x) \gamma(x)\frac{x^r-1}{b(x)} \Mod{x^r-1}\right]\left(\frac{x^m-1}{x^r-1}\right)+ \frac{\gcd(b(x),l(x))(x^s-1)}{b(x)}\left(\frac{x^m-1}{x^s-1}\right)=0\\
			\Leftrightarrow &\left[l(x) \gamma(x)\frac{x^r-1}{b(x)} \Mod{x^r-1}\right]\left(\frac{x^m-1}{x^r-1}\right)+ \gcd(b(x),l(x))\frac{x^r-1}{b(x)}\left(\frac{x^m-1}{x^r-1}\right)=0\\
			\Leftrightarrow &\left[(l(x) \gamma(x)+\gcd(b(x),l(x)))\frac{x^r-1}{b(x)} \Mod{x^r-1}\right]\left(\frac{x^m-1}{x^r-1}\right)=0.
		\end{align*}
		The last equation is satisfied if $b(x)$ divides $l(x) \gamma(x)+\gcd(b(x),l(x))$. Note that the existence of $\gamma(x)$ here is guaranteed by B\'ezout's identity.
	\end{proof}
	For separable double cyclic codes, using $l(x)=0$, and consequently, $\gcd(b(x),l(x))=b(x)$, the generating elements can be simplified as a special case of Theorem \ref{gendual}.
	\begin{corollary}\label{gendualsep}
		If $C$ is a separable double cyclic code over $\mathbb{F}_q$ generated by $(b(x),0)$ and $(0,a(x))$, then the dual code $C^{\perp}$ is also a separable double cyclic code over $\mathbb{F}_q$ generated by $(\bar{b}(x),0)$ and $(0,\bar{a}(x))$ such that
		\begin{align*}
			\bar{b}^{*}(x)=\frac{x^r-1}{b(x)} \text{ and } \bar{a}^{*}(x)=\frac{x^s-1}{a(x)}.
		\end{align*}
	\end{corollary}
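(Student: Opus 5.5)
The plan is to specialize Theorem \ref{gendual} to $l(x)=0$ and then show that the resulting dual is itself separable. With $l(x)=0$ we have $\gcd(b(x),l(x))=\gcd(b(x),0)=b(x)$, up to a unit which we normalize away by taking generator polynomials monic. Substituting into Theorem \ref{gendual} gives at once
\[
\bar{b}^{*}(x)=\frac{x^r-1}{b(x)}, \qquad \bar{a}^{*}(x)=\frac{b(x)(x^s-1)}{b(x)a(x)}=\frac{x^s-1}{a(x)},
\]
which are the two formulas claimed.

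It remains to show that $C^{\perp}$ is separable, i.e.\ that we may take $\bar{l}(x)=0$. In Theorem \ref{gendual} the only constraint on $\bar{l}$ is $\bar{l}^{*}(x)=\gamma(x)\frac{x^r-1}{b(x)}$, where $\gamma$ satisfies $b(x)\mid l(x)\gamma(x)+\gcd(b(x),l(x))$. When $l=0$ this divisibility reads $b(x)\mid b(x)$, which holds for every $\gamma$, so we may choose $\gamma=0$ and hence $\bar{l}=0$.

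Choosing $\gamma$ alone does not yet prove that $C^{\perp}$ is generated by $(\bar b(x),0)$ and $(0,\bar a(x))$. For that I would argue directly, using two facts.

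\textbf{Containment.} Both pairs lie in $C^{\perp}$. Indeed, $(\bar b,0)\circ(b,0)$ and $(\bar b,0)\circ(0,a)$ vanish because $b(x)\bar b^{*}(x)=x^r-1$; here one uses the symmetry $u v^{*}\equiv 0 \iff v u^{*}\equiv 0$ modulo $x^r-1$, which follows from the orthogonality interpretation of $\circ$ due to Aydogdu. The same reasoning handles $(0,\bar a)$ against the generators of $C$.

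\textbf{Dimension count.} Let $D$ be the code generated by $(\bar b,0)$ and $(0,\bar a)$. It has dimension
\[
r+s-\deg\bar b-\deg\bar a=\deg b+\deg a=(r+s)-\dim C=\dim C^{\perp},
\]
where $\deg \bar b=\deg \bar b^{*}$ because $\bar b^{*}$ divides $x^r-1$ and so has nonzero constant term. Hence $D=C^{\perp}$, and since $D=\pi_r(D)\times\pi_s(D)$, the dual is separable.

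The main obstacle is the dimension step. One has to know that the dimension formula $r+s-\deg b-\deg a$ applies to $D$, with generators normalized so that $\bar b$ and $\bar a$ divide $x^r-1$ and $x^s-1$. This holds because $\bar b$ and $\bar a$ are reciprocals of divisors of $x^r-1$ and $x^s-1$, and $(x^n-1)^{*}=-(x^n-1)$.
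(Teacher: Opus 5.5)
Your proposal is correct. The formulas for $\bar b^{*}$ and $\bar a^{*}$ come from the same specialization the paper uses; the paper's entire justification is to substitute $l(x)=0$ and $\gcd(b(x),l(x))=b(x)$ into Theorem \ref{gendual}.

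You differ from the paper in how you justify that $C^{\perp}$ is separable. The paper treats this as immediate, and it is. With $l=0$ the theorem gives $\bar l^{*}(x)=\gamma(x)\bar b^{*}(x)$. So for \emph{every} admissible $\gamma$, the polynomial $\bar l$ is, up to a power of the unit $x$, a multiple of $\bar b$ in $\mathbb{F}_q[x]/\langle x^r-1\rangle$. Subtracting a multiple of $(\bar b(x),0)$ turns $(\bar l(x),\bar a(x))$ into $(0,\bar a(x))$. Your worry that fixing $\gamma=0$ is only a choice is therefore settled without extra work.

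Your containment-plus-dimension argument is an independent verification, and it does not actually need Theorem \ref{gendual}. You could take the two formulas as definitions, check $D\subseteq C^{\perp}$ using $b\bar b^{*}=x^r-1$, $a\bar a^{*}=x^s-1$ and the fact that $u\circ v=0$ if and only if $v\circ u=0$, and then conclude from $\dim D=\deg b+\deg a=\dim C^{\perp}$. This takes a few more lines but is self-contained. It also avoids relying on the ``smallest degree'' reasoning in the proof of Theorem \ref{gendual}, and it yields the paper's follow-up identity $C^{\perp}=(\pi_r(C))^{\perp}\times(\pi_s(C))^{\perp}$ directly.
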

	
	Corollary \ref{gendualsep} above is the generalization of Proposition 4.14 in \cite{Borges2018}, and it also shows that the dual code of a separable double cyclic code is also separable. Moreover, $\bar{b}(x)$ and $\bar{a}(x)$ are the generator polynomials for the dual code of cyclic codes generated by $b(x)$ and $a(x)$, respectively. From the previous section, we have $\pi_r(C)=\langle b(x)\rangle$ and $\pi_s(C)=\langle a(x)\rangle$.  Therefore, we can conclude that $\pi_r(C^{\perp})=\langle\bar{b}(x)\rangle=\langle b(x)\rangle^{\perp}=(\pi_r(C))^{\perp}$ and $\pi_s(C^{\perp})=\langle\bar{a}(x)\rangle=\langle a(x)\rangle^{\perp}=(\pi_s(C))^{\perp}$. As a consequence, $C^{\perp}=(\pi_r(C))^{\perp}\times (\pi_s(C))^{\perp}$.
	
	\section{Characterization of self-dual double cyclic codes over $\mathbb{F}_q$}
	A linear code $C$ is said to be a self-dual code if it is equal to its dual, i.e., $C^{\perp}=C$. Because of their unique and rich structure, self-dual codes become an interesting area to explore. Their deep connections to other mathematical fields such as unimodular lattices, invariant theory, and combinatorial design theory also enable the construction of codes with desirable properties.
	
	A cyclic code that is also a self-dual code is called a self-dual cyclic code. Jia et al. \cite{Jia2011} worked specifically on self-dual cyclic codes over a finite field $\mathbb{F}_q.$  An important result in their article is the necessary and sufficient conditions for the existence of self-dual cyclic codes.
	\begin{theorem}[\cite{Jia2011}, Section III, Theorem I]\label{Jia31}
		There exists at least one self-dual cyclic code of length $n$ over $\mathbb{F}_q$ if and only if $q$ is a power of 2 and $n$ is even.
	\end{theorem}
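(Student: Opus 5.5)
The plan is to prove both directions by working with the generator polynomial of a cyclic code and its dual. Let $C=\langle g(x)\rangle$ be a cyclic code of length $n$ over $\mathbb{F}_q$ with $g(x)\mid x^n-1$, and put $h(x)=(x^n-1)/g(x)$. By the cyclic-code analogue of Corollary \ref{gendualsep} (take $s=0$, or just use the classical fact), $C^{\perp}=\langle \bar h(x)\rangle$, where $\bar h(x)=h^{*}(x)/h(0)$ is the monic reciprocal of $h$. Self-duality $C=C^{\perp}$ is therefore equivalent to
\[
g(x)=\bar h(x), \qquad\text{which gives}\qquad x^n-1=g(x)\,h(x)=\lambda\, g(x)\,g^{*}(x)
\]
for a nonzero constant $\lambda$. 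Comparing degrees also forces $\deg g=n/2$, so $n$ must be even.

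\textbf{Necessity.} Suppose a self-dual cyclic code exists. Then $n$ is even, and the key step is to show that $q$ is even. Assume $q$ is odd and write $n=p^{k}n'$ with $\gcd(n',p)=1$, where $p=\operatorname{char}\mathbb{F}_q$. Then
\[
x^n-1=(x^{n'}-1)^{p^k},
\]
and $x-1$ divides it with multiplicity exactly $p^k$, which is odd. Since $q$ is odd, $x+1\neq x-1$, and $x+1$ also divides $x^{n}-1$ (as $n$ is even) with the same multiplicity $p^k$. The factor $x-1$ is self-reciprocal up to a scalar, so it contributes equally to $g$ and $g^{*}$. Hence its multiplicity in $g\,g^{*}$ is $2\cdot v_{x-1}(g)$, which is even. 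This contradicts the odd multiplicity $p^k$. So $q$ is even.

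I expect this parity argument to be the main obstacle. One has to be careful that the multiplicities of the factors $x\pm1$ in $g^{*}$ equal those in $g$. This holds because $(x-1)^{*}=-(x-1)$: reciprocation preserves the multiplicity of self-reciprocal irreducible factors.

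\textbf{Sufficiency.} Suppose $q=2^{e}$ and $n=2m$. Then
\[
x^{n}-1=(x^{m}-1)^{2}
\]
in characteristic $2$. Take $g(x)=x^{m}-1$. Since $(x^m-1)^{*}=1-x^m=x^m-1$ in characteristic $2$, we get $h=g$ and $\bar h=g$. So $C=\langle x^{m}-1\rangle$ is self-dual, with dimension $n-m=m=n/2$, as required.
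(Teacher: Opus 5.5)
Your proof is correct and complete. The paper gives no argument for this theorem and simply quotes it from Jia et al. Your route is essentially the standard one from that source: $C=C^{\perp}$ forces $x^n-1=\lambda\, g(x)g^{*}(x)$, in which the self-reciprocal factor $x-1$ occurs with even multiplicity, whereas in $x^n-1=(x^{n'}-1)^{p^k}$ it occurs with multiplicity $p^k$. So $p^k$ must be even. Your version tracks only the single factor $x-1$, rather than the full splitting of $x^{n'}-1$ into self-reciprocal factors and reciprocal pairs.

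Only cosmetic remarks remain. The observation about $x+1$ is never used; your argument in fact rules out self-dual cyclic codes in odd characteristic for every $n$. Also, ``taking $s=0$'' in Corollary~\ref{gendualsep} is not a meaningful specialization, since $x^{0}-1=0$. You should instead invoke the classical description $C^{\perp}=\langle h^{*}(x)/h(0)\rangle$ directly, with $g$ normalized to be monic so that $C=C^{\perp}$ is equivalent to $g=h^{*}/h(0)$.
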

	Note that if $q$ is a power of $2,$ then for any even positive integer $n$, the cyclic code generated by $x^{n/2}+1$ is self-dual of length $n$ over $\mathbb{F}_{q}$. This code is called the trivial self-dual cyclic code. In \cite{Jia2011}, it was also shown that for some even positive integer $n$, a nontrivial self-dual cyclic code of length $n$ over $\mathbb{F}_{q}$ exists, but for some other $n$, only the trivial one exists.
	
	In this section, we discuss self-dual double cyclic codes, i.e. double cyclic codes which are also self-dual codes. For the separable case, the following theorem is a simple extension from Theorem \ref{Jia31}.
	\begin{theorem}\label{existsep}
		There exists a separable self-dual double cyclic code of length $(r,s)$ over $\mathbb{F}_q$ if and only if $q$ is a power of 2 and both $r$ and $s$ are even.
	\end{theorem}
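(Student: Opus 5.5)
The plan is to reduce the statement to Theorem \ref{Jia31} by showing that a separable double cyclic code $C=\pi_r(C)\times\pi_s(C)$ is self-dual exactly when both projections are self-dual cyclic codes. By Corollary \ref{gendualsep} and the remark after it, if $C$ is separable then $C^{\perp}=(\pi_r(C))^{\perp}\times(\pi_s(C))^{\perp}$. Hence $C^{\perp}=C$ holds if and only if
\[
(\pi_r(C))^{\perp}\times(\pi_s(C))^{\perp}=\pi_r(C)\times\pi_s(C).
\]
A product of subsets determines its factors by projecting onto each coordinate; the factors here are nonempty since they are subspaces. So this equality is equivalent to $(\pi_r(C))^{\perp}=\pi_r(C)$ and $(\pi_s(C))^{\perp}=\pi_s(C)$. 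In words, $\pi_r(C)$ is a self-dual cyclic code of length $r$ and $\pi_s(C)$ is a self-dual cyclic code of length $s$.

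For the forward direction, suppose a separable self-dual double cyclic code of length $(r,s)$ exists. By the equivalence above, there are self-dual cyclic codes of lengths $r$ and $s$ over $\mathbb{F}_q$. Theorem \ref{Jia31} then gives that $q$ is a power of $2$ and that $r$ and $s$ are both even. The degenerate cases $r=0$ or $s=0$ are excluded, since the paper works with positive lengths.

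For the converse, suppose $q$ is a power of $2$ and $r,s$ are even. Then $C_1=\langle x^{r/2}+1\rangle$ and $C_2=\langle x^{s/2}+1\rangle$ are self-dual cyclic codes of lengths $r$ and $s$ (the trivial codes noted after Theorem \ref{Jia31}). The code $C=C_1\times C_2$, generated by $(x^{r/2}+1,0)$ and $(0,x^{s/2}+1)$, is an $\mathbb{F}_q[x]$-submodule of $\mathbb{F}_{q,r,s}$, so it is double cyclic. It is separable because $\pi_r(C)=C_1$ and $\pi_s(C)=C_2$. By the equivalence it is self-dual.

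As an explicit check via Corollary \ref{gendualsep}, take $b(x)=x^{r/2}+1$. Then $\bar b^{*}(x)=(x^r-1)/b(x)=x^{r/2}+1$ in characteristic $2$, and this polynomial is self-reciprocal, so $\bar b=b$. The same computation gives $\bar a=a$ for $a(x)=x^{s/2}+1$.

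No step is a real obstacle. The only point needing care is justifying $C^{\perp}=(\pi_r(C))^{\perp}\times(\pi_s(C))^{\perp}$. Alternatively one can see it directly: the inner product on $\mathbb{F}_q^{r+s}$ splits as the sum of the inner products on the two blocks, and the orthogonal complement of a product of subspaces is the product of their complements.
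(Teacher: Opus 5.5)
Your proposal is correct and follows the paper's proof: you use $C^{\perp}=(\pi_r(C))^{\perp}\times(\pi_s(C))^{\perp}$ to reduce self-duality of a separable code to self-duality of both projections, then apply Theorem \ref{Jia31}. For the converse you take the product of the trivial self-dual cyclic codes generated by $(x^{r/2}+1,0)$ and $(0,x^{s/2}+1)$; your additional checks (the two-way equivalence, the direct inner-product splitting, and the computation via Corollary \ref{gendualsep}) make explicit what the paper leaves implicit.
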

	\begin{proof}
		Let $C$ be a separable double cyclic code over $\mathbb{F}_q$. We have $C=\pi_r(C)\times \pi_s(C)$ and $C^{\perp}=(\pi_r(C))^{\perp}\times (\pi_s(C))^{\perp}$. Hence, if $C$ is self-dual, then both $\pi_r(C)$ and $\pi_s(C)$ are self-dual cyclic codes over $\mathbb{F}_q$. From Theorem \ref{Jia31}, both $r$ and $s$ have to be even, and $q$ is a power of 2. Conversely, if $q$ is a power of 2, and both $r$ and $s$ are even, then we have at least the trivial self-dual cyclic codes of length $r$ and $s$ over $\mathbb{F}_q$, that are generated by $x^{r/2}+1$ and $x^{s/2}+1$, respectively. Furthermore, we have at least one separable self-dual cyclic code of length $(r,s)$ over $\mathbb{F}_q$, that is generated by $(x^{\frac{r}{2}}+1,0)$ and $(0,x^{\frac{s}{2}}+1)$.
	\end{proof}
	
	Now we consider the nonseparable case, where the double cyclic code is not equal to the direct product of its canonical projections. First of all, for some $q,r,s\in\mathbb{N}$, nonseparable self-dual double cyclic codes of length $(r,s)$ over $\mathbb{F}_q$ do indeed exist, as shown in the following example.
	
	\begin{example}\label{exsddc1}
		Here $q=2$ and $r=s=4$. Let $C$ be a double cyclic code of length $(4,4)$ over $\mathbb{F}_2$ 
		generated by $((1+x)^3,0)$ and $(1+x,1+x)$. If the codewords are represented as binary vectors of length 8, then $C$ is generated by $$S=\{1111\;0000,1100\;1100,0110\;0110,0011\;0011\}.$$
		It can be shown that $S$ is an orthogonal set over $\mathbb{F}_2$ and
		\begin{align*}
			C=\langle S\rangle=\{&0000\;0000,0000\;1111,0011\;0011,0011\;1100,0101\;0101,0101\;1010,\\
			& 0110\;0110,0110\;1001,1001\;0110,1001\;1001,1010\;0101,1010\;1010,\\
			& 1100\;0011,1100\;1100,1111\;0000,1111\;1111\}.
		\end{align*}
		Thus, $C$ is a self-dual code of length 8, dimension 4, and minimum distance 4 over $\mathbb{F}_2$. This code is an optimal code since the only binary MDS codes are the trivial ones. Moreover, the canonical projections of $C$ are
		$$\pi_r(C)=\{0000,0011,0101,0110,1001,1010,1100,1111\}$$ and
		$$\pi_s(C)=\{0000,0011,0101,0110,1001,1010,1100,1111\}.$$
		Here, $C\neq \pi_r(C)\times\pi_s(C)$, since $0011\in\pi_r(C)$ and $0101\in\pi_s(C)$, but $0011\;0101\notin C$.
	\end{example}
	Note that in Example \ref{exsddc1}, the canonical projections are cyclic codes, but they are not self-dual. Therefore, if $C$ is a self-dual double cyclic code that is not separable, then $\pi_r(C)$ and $\pi_s(C)$ do not need to be self-dual cyclic codes.
	Furthermore, this fact opens up the possibility that nonseparable self-dual double cyclic codes may exist over a finite field of characteristic other than 2. The following theorem gives a characterization of self-dual double cyclic codes related to their generating sets.
	\begin{theorem}\label{charsddc}
		Let $C$ be a double cyclic code of length $(r,s)$ over $\mathbb{F}_q$ generated by $(b(x),0)$ and $(l(x),a(x))$, and let $b(x)=d(x)\beta(x)$ and $l(x)=d(x)\lambda(x)$, where $d(x)=\gcd(b(x),l(x))$. The code $C$ is self-dual if and only if these three conditions are satisfied:
		\begin{itemize}
			\item $d(x) b^{*}(x) = x^r-1$,
			\item $a(x)a^{*}(x)\beta(x)=x^s-1$,
			\item $\beta(x)$ divides $1-\lambda(x)\lambda^{*}(x)$.
		\end{itemize}
	\end{theorem}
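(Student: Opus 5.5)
The plan is to use the standard criterion that a linear code of length $n=r+s$ is self-dual if and only if it is self-orthogonal and has dimension $n/2$. I would express self-orthogonality through the operation $\circ$ (Aydogdu's criterion) applied to the two generators, and the dimension through the formula $r+s-\deg b(x)-\deg a(x)$. For the converse direction I would instead compare the normalized generators of $C$ and $C^{\perp}$ supplied by Theorem \ref{gendual}.

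\emph{Necessity.} Suppose $C=C^{\perp}$. The polynomials $b(x)$ and $a(x)$ are determined by $C$ up to scalars: $a(x)$ generates $\pi_s(C)$ and $b(x)$ generates $C'$. Hence $b\sim\bar b$ and $a\sim\bar a$. Theorem \ref{gendual} gives $\bar b^*=(x^r-1)/d$, so $b^*d=x^r-1$ after normalizing leading coefficients. It also gives $\bar a^*=d(x^s-1)/(ba)=(x^s-1)/(a\beta)$, so $aa^*\beta=x^s-1$. For the third condition, $(l,a)\in C=C^{\perp}$ must satisfy $(l,a)\circ(l,a)=0$, which I would rewrite as in the next paragraph to obtain $\beta\mid 1-\lambda\lambda^*$.

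\emph{Sufficiency.} Assume the three conditions.

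\begin{itemize}
\item \emph{Dimension.} The first condition gives $\deg b+\deg d=r$. The second gives $2\deg a+\deg b-\deg d=s$. Adding, $2(\deg a+\deg b)=r+s$, so $\dim C=(r+s)/2$.
\item \emph{Orthogonality of $(b,0)$ with both generators.}
  \begin{itemize}
  \item $bb^*=\beta\,(db^*)$ is divisible by $x^r-1$.
  \item Taking reciprocals in $db^*=x^r-1$ gives $d^*b=\pm(x^r-1)$. Since $l^*$ equals $d^*\lambda^*$ times a power of $x$, the product $bl^*$ is $\equiv0 \pmod{x^r-1}$.
  \end{itemize}
\item \emph{The remaining product $(l,a)\circ(l,a)$.}
  \begin{itemize}
  \item Modulo $x^r-1$, $ll^*=dd^*\lambda\lambda^*$. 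From $d^*b=\pm(x^r-1)$ we get $dd^*=\pm(x^r-1)/\beta$.
  \item From $aa^*=(x^s-1)/\beta$, the second summand becomes $\frac{x^s-1}{\beta}\cdot\frac{x^m-1}{x^s-1}=\frac{x^m-1}{\beta}$.
  \end{itemize}
  Therefore the whole expression is $\frac{x^m-1}{\beta}\,(1\pm\lambda\lambda^*)$, up to a monomial factor. This vanishes modulo $x^m-1$ exactly when $\beta$ divides $1\pm\lambda\lambda^*$, and the sign should come out as $1-\lambda\lambda^*$.
\end{itemize}

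With self-orthogonality and dimension $(r+s)/2$ in hand, $C=C^{\perp}$.

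\emph{Main obstacle.} I expect the main difficulty to be the careful bookkeeping of reciprocals. The issues are:
\begin{itemize}
\item the identity $(x^r-1)^*=-(x^r-1)$;
\item monomial factors $x^k$ arising because $\deg l\neq\deg d+\deg\lambda$ in general, and because reductions modulo $x^r-1$ change degrees;
\item leading-coefficient normalizations.
\end{itemize}
These are precisely what fix the sign in $1-\lambda\lambda^*$. Monomial factors are units modulo $x^m-1$, so they are harmless. To pin down the sign, I would track everything with the identity $f^*(x)=x^{\deg f}f(x^{-1})$ and evaluate the products in $\mathbb{F}_q[x,x^{-1}]/\langle x^m-1\rangle$, where $x^{-1}$ is available.
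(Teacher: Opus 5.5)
The overall plan matches the paper's. Conditions 1 and 2 come from Theorem \ref{gendual}, and sufficiency comes from evaluating $\circ$ on the two generators. Two of your choices improve on the paper:
\begin{itemize}
\item Your dimension count $\deg a+\deg b=(r+s)/2$ supplies a step the paper skips; it infers $C=C^{\perp}$ from orthogonality of the generators alone.
\item Reading condition 3 off $(l,a)\circ(l,a)=0$ is cleaner than the paper's route through $l=\bar l$ and $\lambda^*=-\gamma$.
\end{itemize}

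The gap is in how you handle what you call the main obstacle. With $\circ$ exactly as the paper defines it, no monomials arise at all:
\begin{itemize}
\item $(d\lambda)^*=d^*\lambda^*$ exactly, so your worry that $\deg l\neq\deg d+\deg\lambda$ is unfounded;
\item $[f \bmod x^r-1]\frac{x^m-1}{x^r-1}\equiv f\frac{x^m-1}{x^r-1}\pmod{x^m-1}$;
\item reciprocating $db^*=x^r-1$ gives $d^*b=-(x^r-1)$.
\end{itemize}
So you get exactly $\frac{x^m-1}{\beta}(1-\lambda\lambda^*)$, which is the paper's computation. But that $\circ$ is not a valid orthogonality test. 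To test $(u,u')$ against all shifts of $(v,v')$, you must pair $u$ with $x^r v(x^{-1})\equiv x^{-\deg v}v^*(x)$ and $u'$ with $x^{-\deg v'}v'^*(x)$. For $(l,a)\circ(l,a)$ these factors are $x^{-\deg l}$ and $x^{-\deg a}$. They sit on \emph{different} summands, so they are not a harmless global unit. The honest computation gives $\frac{x^m-1}{\beta}\bigl(x^{-\deg a}-x^{-\deg l}\lambda\lambda^*\bigr)$, that is, the condition $\beta\mid x^{\deg l}-x^{\deg a}\lambda\lambda^*$. This agrees with $\beta\mid 1-\lambda\lambda^*$ when $x^{\deg l-\deg a}\equiv 1\pmod{\beta}$ (for instance $\deg l=\deg a$, or $\beta\mid x-1$), and is incompatible with it otherwise.

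So no bookkeeping can rescue the argument, because the statement itself fails. Cyclically shifting only the first block of the code in Example \ref{exsddc1} is a coordinate permutation, so it preserves self-duality. It yields $C=\langle((1+x)^3,0),(x+x^2,1+x)\rangle$ over $\mathbb{F}_2$ with $(r,s)=(4,4)$. This code is spanned by $1111\,0000$, $0110\,1100$, $0011\,0110$, $1001\,0011$, which all have weight 4, are pairwise orthogonal, and give dimension 4. The generators are in normal form with $\deg l<\deg b$, and $d=1+x$, $\beta=(1+x)^2$, $\lambda=x$, $\lambda^*=1$. Conditions 1 and 2 hold. However, $(1+x)^2\nmid 1+x=1-\lambda\lambda^*$, whereas $x^{\deg l}-x^{\deg a}\lambda\lambda^*=x^2-x^2=0$. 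Correspondingly, the paper's $(l,a)\circ(l,a)=(x+x^3)+(1+x^2)\neq 0$, even though $(l,a)$ is orthogonal to all its shifts.

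Hence the ``only if'' direction is false as stated. The paper's proof goes through only because its simplified $\circ$ hides the relative shift; the same simplified $\circ$ is used for the fourth condition in Theorem \ref{gendual}. A correct version must carry the factor $x^{\deg l-\deg a}$ in condition 3.

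A smaller point: ``after normalizing leading coefficients'' hides a step. Once $b$ is scaled to give condition 1, condition 2 holds only up to a scalar. The remaining freedom (rescaling $(l,a)$, or the choice of associate for $d$) changes that scalar only by squares. It is in fact a square: in odd characteristic, $\beta^*=-\beta$ forces $x-1\mid\beta$, and evaluating the orthogonality relation at $x=1$ gives $\lambda(1)^2$. This still needs to be argued, and the paper omits it too.
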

	\begin{proof}
		($\Rightarrow$) Let the dual code $C^{\perp}$ be generated by $(\bar{b}(x),0)$ and $(\bar{l}(x),\bar{a}(x))$. If $C$ is self-dual, then $C$ and $C^{\perp}$ have the same generating sets. Using Theorem \ref{gendual}, from $b(x)=\bar{b}(x)$ we have $$b^{*}(x)=\frac{x^r-1}{\gcd(b(x),l(x))}\Leftrightarrow \gcd(b(x),l(x)) b^{*}(x) = d(x) b^{*}(x)= x^r-1,$$ and from $a(x)=\bar{a}(x)$ we have $$
		a^{*}(x)=\frac{\gcd(b(x),l(x))(x^s-1)}{b(x) a(x)} \Leftrightarrow \frac{a(x)a^{*}(x)b(x)}{\gcd(b(x),l(x))}=a(x)a^{*}(x)\beta(x)=x^s-1.$$
		Thus, the first two conditions are satisfied. Moreover, from $l(x)=\bar{l}(x)$ we have $$l^{*}(x)=\gamma(x) \frac{x^r-1}{b(x)}\Leftrightarrow b(x)l^{*}(x)=\gamma(x) (x^r-1).$$ Since $d(x)b^{*}(x)=d(x)d^{*}(x)\beta^{*}(x)=x^r-1$, taking the reciprocal of both sides, we have $d(x)d^{*}(x)\beta(x)=-(x^r-1)$. As a consequence, $$d(x)\beta(x) d^{*}(x)\lambda^{*}(x)=-\gamma(x)d^{*}(x)d(x)\beta(x)\Leftrightarrow \lambda^{*}(x)=-\gamma(x).$$
		Recall that $\gamma(x)$ is a polynomial over $\mathbb{F}_q$ such that $b(x)$ divides $d(x)+\gamma(x)l(x)$. This is equivalent to $\beta(x)$ divides $1+\gamma(x)\lambda(x)=1-\lambda(x)\lambda^{*}(x)$, i.e. the third condition is satisfied.
		
		($\Leftarrow$) First, consider the case where $\beta(x)$ is a nonzero constant polynomial and therefore a unit in $\mathbb{F}_q[x]$. In this case, $b(x)$ divides $d(x)$, and consequently also divides $l(x)=d(x)\lambda(x)$. Since $-\beta(x)^{-1}\lambda(x)(b(x),0)+(l(x),a(x))=(0,a(x))$, it follows that the code generated by $(b(x),0)$ and $(l(x),a(x))$ is equal to the code generated by $(b(x),0)$ and $(0,a(x))$. Thus, it is a separable double cyclic code. From $d(x)b^*(x)=d(x)d^*(x)\beta^*(x)=x^r-1$ and $a(x)a^{*}(x)\beta(x)=x^s-1$, we have $b(x)b^{*}(x)\Mod{x^r-1}=0$ and $a(x)a^{*}(x)\Mod{x^s-1}=0$. Thus, $(b(x),0)$ and $(0,a(x))$ form an orthogonal set and hence the code is self-dual.
		
		Now consider the other case where $\beta(x)$ is not a constant polynomial. Let $m=\operatorname{lcm}(r,s)$. Recall that $\{(b(x),0),(l(x),a(x))\}$ is an orthogonal set if and only if these three conditions are satisfied:
		\begin{itemize}
			\item $b(x)b^{*}(x)\Mod{x^r-1}=0$,
			\item $b(x)l^{*}(x)\Mod{x^r-1}=0$,
			\item $\left[l(x)l^{*}(x)\Mod{x^r-1}\right] \left(\frac{x^m-1}{x^r-1}\right)+\left[a(x)a^{*}(x)\Mod{x^s-1}\right]\left(\frac{x^m-1}{x^s-1}\right)=0$.
		\end{itemize}
		Since $d(x) b^{*}(x) = x^r-1$, we have $b(x)b^{*}(x)=\beta(x) d(x)b^{*}(x)=\beta(x)(x^r-1)$ and $b(x)l^{*}(x)=b(x) d^{*}(x)\lambda^{*}(x)=-\lambda^{*}(x)(x^r-1)$. In other words, $b(x)b^{*}(x)$ and $b(x)l^{*}(x)$ are both divided by $x^r-1$. These satisfy the first two conditions.
		
		From $a(x)a^{*}(x)\beta(x)=x^s-1$ and $\beta(x)$ being a nonconstant polynomial, it is clear that $\deg(a(x)a^{*}(x))<s$. Since $\beta(x)$ divides $1-\lambda(x)\lambda^{*}(x)$, we can write $1-\lambda(x)\lambda^{*}(x)=\beta(x)\nu(x)$, for some $\nu(x)\in\mathbb{F}_q[x]$. Then we obtain
		\begin{align*}
			\left[l(x)l^{*}(x)\Mod{x^r-1}\right]&\left(\frac{x^m-1}{x^r-1}\right)+\left[a(x)a^{*}(x)\Mod{x^s-1}\right]\left(\frac{x^m-1}{x^s-1}\right)\\
			&=\left[l(x)l^{*}(x)\Mod{x^r-1}\right]\left(\frac{x^m-1}{x^r-1}\right)+\frac{x^s-1}{\beta(x)}\frac{x^m-1}{x^s-1}\\
			&=\left[\left(l(x)l^{*}(x)+\frac{x^r-1}{\beta(x)}\right)\Mod{x^r-1}\right]\left(\frac{x^m-1}{x^r-1}\right)\\
			&=\left[d(x)d^{*}(x)(\lambda(x)\lambda^{*}(x)-1)\Mod{x^r-1}\right]\left(\frac{x^m-1}{x^r-1}\right)\\
			&=\left[-d(x)d^{*}(x)\beta(x)\nu(x)\Mod{x^r-1}\right]\left(\frac{x^m-1}{x^r-1}\right)\\
			&=\left[\nu(x)(x^r-1)\Mod{x^r-1}\right]\left(\frac{x^m-1}{x^r-1}\right)=0
		\end{align*}
		This satisfies the third condition. Therefore, $C=\langle (b(x),0),(l(x),a(x))\rangle$ is a self-dual double cyclic code of length $(r,s)$ over $\mathbb{F}_q$.
	\end{proof}
	
	Note that $\lambda(x)=1$ always satisfies the third condition of Theorem \ref{charsddc}. In other words, to construct a self-dual double cyclic code, we only need to find polynomials $d(x)$, $\beta(x)$, and $a(x)$ that satisfy the first two conditions of Theorem \ref{charsddc}, which are $d(x)d^{*}(x)\beta^{*}(x)=x^r-1$ and $a(x)a^{*}(x)\beta(x)=x^s-1$. Then we obtain a self-dual double cyclic code generated by $(d(x)\beta(x),0)$ and $(d(x),a(x))$. We see some explicit construction methods and examples for some specific lengths in the next sections.
	
	\section{Self-dual double cyclic codes of length $(r,r)$ over $\mathbb{F}_q$}
	In the previous section, we already know from Theorem \ref{existsep} that separable self-dual double cyclic codes only exist over finite fields of characteristic 2, and they can be constructed simply by taking the direct product of two self-dual cyclic codes. Therefore, our discussion shall be on the nonseparable case.
	
	This section focuses on double cyclic codes of length $(r,r)$. Note that for $r=s$, the orthogonality between $(u(x),u'(x))$ and $(v(x),v'(x))$ can be simplified to
	\begin{align*}
		(u(x),u'(x))\circ (v(x),v'(x))&
		=(u(x)v^{*}(x)+u'(x)v'^{*}(x))\Mod{x^r-1}=0.
	\end{align*}
	The next theorem provides a necessary condition for the existence of self-dual double cyclic codes of length $(r,r)$ over $\mathbb{F}_q$.
	\begin{lemma}\label{sddcc}
		If there exists a nonseparable self-dual double cyclic code of length $(r,r)$ over a finite field $\mathbb{F}_q$, then $-1$ is a square in $\mathbb{F}_q$.
	\end{lemma}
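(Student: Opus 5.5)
If $q$ is even then $-1=1$ is a square, so I will assume $q$ is odd, of characteristic $p$, and write $r=p^k r'$ with $p\nmid r'$. The plan is to localize everything at the single factor $x-1$ of $x^r-1$ and evaluate at $x=1$. At $x=1$ the reciprocal operation becomes the identity ($f^{*}(1)=f(1)$), so the self-orthogonality of $(l(x),a(x))$ should collapse to an identity $u^2+w^2=0$ in $\mathbb{F}_q$ with $u,w\neq 0$.

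\textbf{Setup.} Let $C=\langle (b(x),0),(l(x),a(x))\rangle$ be nonseparable and self-dual, with $b=d\beta$, $l=d\lambda$ as in Theorem~\ref{charsddc}. Let $v(\cdot)$ denote the $(x-1)$-adic valuation and set $N=v(x^r-1)=p^k$, which is odd. Reciprocation preserves $v$, because the reciprocal of $x-1$ is $-(x-1)$ and the factors $x^{\deg f}$ are units at $x=1$.

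\textbf{Step 1: valuations.} The first two conditions of Theorem~\ref{charsddc} read $d d^{*}\beta^{*}=x^r-1$ and $a a^{*}\beta=x^r-1$. Taking $v$ gives
\[
2v(d)+v(\beta)=N=2v(a)+v(\beta).
\]
Hence $v(a)=v(d)=:e$. Since $N$ is odd, $v(\beta)$ is odd, so $v(\beta)\ge 1$, i.e.\ $(x-1)\mid\beta$, and in particular $2e<N$. The third condition, $\beta\mid 1-\lambda\lambda^{*}$, evaluated at $x=1$ then gives $\lambda(1)^2=1$, so $\lambda(1)=\pm1\neq0$.

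\textbf{Step 2: self-orthogonality at $x=1$.} For $r=s$, self-orthogonality of $(l,a)$ says that $x^r-1$ divides $l(x)l^{*}(x)x^{u}+a(x)a^{*}(x)x^{w}$ for suitable shifts $u,w$ coming from the reduction modulo $x^r-1$. Equivalently, working with $f(x^{-1})$ in $\mathbb{F}_q[x]/\langle x^r-1\rangle$,
\[
l(x)l(x^{-1})+a(x)a(x^{-1})\equiv 0 .
\]
Reduce this in the local ring $\mathbb{F}_q[x]_{(x-1)}/(x-1)^N$. Write $d=(x-1)^e d_0$ and $a=(x-1)^e a_0$ with $d_0,a_0$ units at $x=1$. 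Using $x^{-1}-1=-x^{-1}(x-1)$, both terms acquire the common factor $(x-1)^{2e}(-x^{-1})^e$. Since $2e<N$, cancelling it leaves
\[
d_0(x)d_0(x^{-1})\lambda(x)\lambda(x^{-1})+a_0(x)a_0(x^{-1})\equiv 0 \pmod{(x-1)^{N-2e}},
\]
and $N-2e\ge1$. Evaluating at $x=1$ gives $d_0(1)^2\lambda(1)^2+a_0(1)^2=0$ with every factor nonzero, so
\[
-1=\bigl(a_0(1)/(d_0(1)\lambda(1))\bigr)^2
\]
is a square in $\mathbb{F}_q$.

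\textbf{Main obstacle.} The delicate point is Step 1, namely forcing $v(\beta)\ge1$ at the factor $x-1$. At other self-reciprocal irreducible factors the analogous reduction only produces a norm equation over $\mathbb{F}_{q^{2e}}$, which is always solvable and gives no information. It is exactly the parity argument ($N=p^k$ odd) that guarantees the nonseparable part of the code lives at $x-1$. I will also need to handle the bookkeeping between $f^{*}$ and $f(x^{-1})$ and the normalization of the first two conditions of Theorem~\ref{charsddc} (leading constants and signs) with some care. These only introduce unit factors, and units do not affect the valuation count or whether the final quantity is a square up to the factors tracked above.
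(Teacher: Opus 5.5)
Your proof is correct, and it takes a genuinely different route from the paper's.

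The paper never uses the self-orthogonality of $(l(x),a(x))$ directly. It treats $d(x)d^{*}(x)\beta^{*}(x)=x^r-1$ and $a(x)a^{*}(x)\beta(x)=x^r-1$ as exact identities. It reciprocates the first to get $\beta^{*}(x)=-\beta(x)$, then cancels $\beta(x)$ to obtain the global identity $d(x)d^{*}(x)=-a(x)a^{*}(x)$. Only then does it strip the common power of $x-1$ and evaluate at $x=1$. This is shorter, and because $\beta(x)$ is cancelled outright it needs neither localization nor your parity observation.

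Its cost is that it depends on the scalars in those two identities. The derivation in Theorem~\ref{charsddc} rests on $b=\bar b$ and $a=\bar a$, which for $C=C^{\perp}$ hold only up to nonzero constants. If the right-hand sides are $c_1(x^r-1)$ and $c_2(x^r-1)$, the same computation shows only that $-c_2/c_1$ is a square.

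Your argument uses Theorem~\ref{charsddc} only through $(x-1)$-adic valuations, where units are invisible. It extracts the square from the exact congruence $l(x)l(x^{-1})+a(x)a(x^{-1})\equiv 0 \pmod{x^r-1}$. The price is the fact that $v(x^r-1)=p^k$ is odd, which is precisely what forces $2e<N$, so that the $x-1$ component survives the cancellation.

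Comparing the two evaluations at $x=1$ gives $c_2/c_1=\lambda(1)^2$. So your computation is also what justifies the normalization the paper takes for granted.

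A minor simplification: you do not need the third condition to get $\lambda(1)\neq 0$, since $d_0(1)^2\lambda(1)^2=-a_0(1)^2\neq 0$ already forces it.
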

	\begin{proof}
		Let $C$ be a self-dual double cyclic code of length $(r,r)$ over $\mathbb{F}_q$ generated by $(b(x),0)$ and $(l(x),a(x))$, for some $a(x),b(x),l(x)\in\mathbb{F}_q[x]/\langle x^r-1\rangle$. Those three polynomials must satisfy $$\gcd(b(x),l(x)) b^{*}(x) = \frac{a(x)a^{*}(x)b(x)}{\gcd(b(x),l(x))}=x^{r}-1.$$
		Let $\gcd(b(x),l(x))=d(x)$. Then $b(x)=d(x)\beta(x)$ and $l(x)=d(x)\lambda(x)$, for some $\beta(x),\lambda(x)\in\mathbb{F}_q[x]$ which are relatively prime. The first condition becomes $$d(x) d^{*}(x)\beta^{*}(x) = x^r-1,$$ which implies $$d(x)d^{*}(x)=g(x)\beta(x)g^{*}(x)\beta^{*}(x)=\beta(x)(x^r-1).$$
		It is clear that $b(x)b^{*}(x)$ is self-reciprocal, and by taking the reciprocal of both sides, we get $$\beta^{*}(x)(-x^r+1)=\beta(x)(x^r-1),$$ which implies $\beta^{*}(x)=-\beta(x)$. Moreover, the second condition becomes $$a(x)a^{*}(x)\beta(x)=x^r-1,$$ and by combining it with the first condition, we get $$d(x)d^{*}(x)(-\beta(x))=a(x)a^{*}(x)\beta(x)\Leftrightarrow d(x) d^{*}(x)=-a(x)a^{*}(x).$$
		
		Let $k$ be the multiplicity of $x-1$ in the factorization of $d(x)$. Since $d(x) d^{*}(x)=-a(x)a^{*}(x)$, the multiplicity of $x-1$ in the factorization of $a(x)$ is also equal to $k$. We can write $d(x)=(x-1)^k \delta(x)$, $d^{*}(x)=(-x+1)^k\delta^{*}(x)$, $a(x)=(x-1)^k\alpha(x)$, and $a^{*}(x)=(-x+1)^k\alpha^{*}(x)$, where $\delta(1)$, $\delta^{*}(1)$, $\alpha(1)$, and $\alpha^{*}(1)$ are all nonzero. Thus, we have $\delta(x)\delta^{*}(x)=-\alpha(x)\alpha^{*}(x)$. Note that for any $f(x)\in\mathbb{F}_q[x]$, $f(1)=f^{*}(1)$. Therefore, from $\delta(1)\delta^{*}(1)=-\alpha(1)\alpha^{*}(1)$, we obtain $\left(\frac{\delta(1)}{\alpha(1)}\right)^2=-1$.
		This means that $-1$ is a square in $\mathbb{F}_q$.
	\end{proof}
	The element $-1$ is a square in $\mathbb{F}_q$ if and only if $q$ is a power of 2 or $q\equiv 1\Mod{4}$. From Lemma \ref{sddcc}, we conclude that self-dual double cyclic codes of length $(r,r)$ do not exist over $\mathbb{F}_{p^i}$, where $p\equiv 3\Mod{4}$ and $i$ is odd.
	
	\begin{remark}
		Let $C$ be a double cyclic code of length $(r,r)$ over $\mathbb{F}_q$ and $$(u|u')=(u_0,u_1,\cdots,u_{r-1}|u'_0,u'_1,\cdots,u'_{r-1})$$ be a codeword in $C$. If we rearrange the digits of $(u|u')$ to $$\bar{u}=(u_0,u'_0,u_1,u'_1,\cdots,u_{r-1},u'_{r-1}),$$ then we can see that the $(r,r)$-double cyclic shift of $(u|u')$ is equivalent to the $2$-quasi-cyclic shift of $\bar{u}$. Thus, any double cyclic code of length $(r,r)$ is equivalent to a $2$-quasi-cyclic code of length $2r$. This fact was also observed in \cite{Abualrub2024}. Moreover, Lemma \ref{sddcc} is consistent with Proposition 6.1 in \cite{Ling2001}.
	\end{remark}
	
	Now we examine the necessary and sufficient conditions for the polynomials $a(x)$, $b(x)$, and $l(x)$ (over $\mathbb{F}_q$) to become generating elements for a self-dual double cyclic code of length $(r,r)$ over $\mathbb{F}_q$.
	\begin{lemma}\label{resgcd}
		Let $\mathbb{F}_q$ be a finite field such that $-1$ is a square in $\mathbb{F}_q$. There exist polynomials $a(x),b(x),l(x)\in\mathbb{F}_q[x]$ such that $$\gcd(b(x),l(x)) b^{*}(x) = \frac{a(x)a^{*}(x)b(x)}{\gcd(b(x),l(x))}=x^r-1$$ if and only if there exists a nonconstant polynomial $f(x)\in\mathbb{F}_q[x]$ such that $f(x)f^{*}(x)$ is a factor of $x^r-1$ over $\mathbb{F}_q$.
	\end{lemma}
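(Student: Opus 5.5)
The forward direction is immediate once $d(x)$ is known to be nonconstant. The backward direction is an explicit construction: take $d(x)=f(x)$, let $\beta(x)$ come from the cofactor of $f(x)f^{*}(x)$, and take $a(x)$ to be $f(x)$ times a square root of $-1$.

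\emph{Forward direction.} Put $d(x)=\gcd(b(x),l(x))$, taken monic as usual, and write $b(x)=d(x)\beta(x)$. The first equation reads $d(x)d^{*}(x)\beta^{*}(x)=x^r-1$, so $f(x):=d(x)$ satisfies $f(x)f^{*}(x)\mid x^r-1$. It remains to see that $d(x)$ is nonconstant. In the intended setting $b(x)$ is a generator of the subcode $C'$, viewed in $\mathbb{F}_q[x]/\langle x^r-1\rangle$, so $\deg b(x)<r$ (a proper divisor). Then $\deg b^{*}(x)\le \deg b(x)<r$, and $d(x)b^{*}(x)=x^r-1$ forces $\deg d(x)\ge 1$.

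I will flag this degree convention explicitly. If $b(x)=x^r-1$ were allowed, the degenerate choice $d=1$, $b=x^r-1$ (in characteristic $2$), or $d=-1$ with $a$ a constant square root of $-1$, would satisfy the equations without any nonconstant $f(x)$. So the statement must be read with $\deg b(x)<r$, i.e.\ for genuine (nonseparable) generating data.

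\emph{Backward direction.} Suppose $f(x)$ is nonconstant and $f(x)f^{*}(x)g(x)=x^r-1$ for some $g(x)\in\mathbb{F}_q[x]$. I will set
\[
d(x)=f(x),\qquad \beta(x)=g^{*}(x),\qquad b(x)=f(x)g^{*}(x),\qquad l(x)=f(x),\qquad a(x)=\iota f(x),
\]
where $\iota\in\mathbb{F}_q$ satisfies $\iota^2=-1$; this is the only place the hypothesis on $-1$ is used. Since $f(x)$ divides $x^r-1$, we have $f(0)\neq0$, so $f^{**}(x)=f(x)$ and $\deg f^{*}(x)=\deg f(x)$, and similarly for $g(x)$. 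Then:

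\begin{itemize}
\item Taking reciprocals of $f f^{*} g=x^r-1$ gives $f f^{*} g^{*}=-(x^r-1)$, using $(x^r-1)^{*}=1-x^r$.
\item Hence $g^{*}=-g$, and also $\beta^{*}=g^{**}=g$.
\item The first condition holds: $d(x)b^{*}(x)=f(x)f^{*}(x)g(x)=x^r-1$.
\item The second condition holds: $a(x)a^{*}(x)\beta(x)=\iota^2 f(x)f^{*}(x)g^{*}(x)=-\bigl(-(x^r-1)\bigr)=x^r-1$.
\item The gcd is as intended: $\gcd(b(x),l(x))=\gcd(fg^{*},f)=f=d(x)$, up to normalizing $f(x)$ to be monic.
\end{itemize}

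The step I expect to need the most care is this scalar bookkeeping. Reciprocals introduce constants such as $f(0)$ and leading coefficients, and the gcd is normalized to be monic. I will first replace $f(x)$ by a scalar multiple chosen so that $f(x)f^{*}(x)g(x)$ equals $x^r-1$ exactly with $f(x)$ monic, absorbing all stray constants into $g(x)$. After that, the sign identity $g^{*}=-g$ obtained from $(x^r-1)^{*}=-(x^r-1)$ is precisely what lets $\iota^2=-1$ cancel the sign in the second condition. I will also check that $b(x)=f(x)g^{*}(x)$ divides $x^r-1$, which holds because $f g^{*}=-f g$ divides $f f^{*} g$, and that $\deg b(x)<r$, which holds because $\deg f^{*}(x)\ge1$.
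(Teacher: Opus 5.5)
Your proof is correct and follows the paper's route. For the converse, writing $x^r-1=f f^{*}g$, your choice $b=fg^{*}=-fg$, $l=f$, $a=\iota f$ is exactly the paper's $b=-fp$, $l=f$, $a=\epsilon f$ with $p=g$. For the forward direction the paper also takes $f=d$. The one substantive difference is that the paper simply asserts that $d(x)$ is nonconstant (\emph{it is clear}), whereas you derive it from $\deg b<r$. You also correctly observe that without such a restriction the forward implication is false as stated. Over $\mathbb{F}_2$ with $r=3$, the polynomials $a=l=1$, $b=x^3+1$ satisfy both equations, yet no nonconstant $f$ with $f f^{*}\mid x^3-1$ exists.

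Three points in how you frame that caveat need fixing.

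First, $\deg b<r$ is not a consequence of the setting. Theorem~\ref{gendouble} only asks $b\mid x^r-1$, and $b=x^r-1$ is the generator when $C'=\{0\}$. So you should state $\deg b<r$ as an explicit extra hypothesis.

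Second, $\deg b<r$ is not the same as nonseparability; it says exactly that $C'\neq\{0\}$. For instance, $\{(u\mid \iota u): u\in\mathbb{F}_q^r\}$ is a nonseparable self-dual double cyclic code of length $(r,r)$ with $C'=\{0\}$. Its generating data are, up to scalars, your degenerate solution, so the excluded case corresponds to genuine codes. Over $\mathbb{F}_2$, $\{(u\mid u)\}$ is even a self-dual double cyclic code of length $(3,3)$, contrary to the paper's later remark.

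Third, your variant with $d=-1$ clashes with your monic normalization of the gcd. The choice $d=1$, $b=1-x^r$, $l=1$, $a=\iota$ does the same job in every characteristic.
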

	\begin{proof}
		($\Rightarrow$) Let $a(x),b(x),l(x)\in\mathbb{F}_q[x]$ such that $$\gcd(b(x),l(x)) b^{*}(x) = \frac{a(x)a^{*}(x)b(x)}{\gcd(b(x),l(x))}=x^r-1.$$ If $\gcd(b(x),l(x))=d(x)$, then $b(x)=d(x)\beta(x)$, for some $\beta(x)\in\mathbb{F}_q[x]$. Moreover, $x^r-1=d(x)b^{*}(x)=d(x)d^{*}(x) \beta^{*}(x)=a(x)a^{*}(x)\beta(x)$. Here, it is clear that $d(x)$ and $a(x)$ are nonconstant polynomials such that $d(x)d^{*}(x)$ and $a(x)a^{*}(x)$ are factors of $1+x+\cdots+x^{r-1}$.
		
		($\Leftarrow$) Let $\epsilon\in\mathbb{F}_q$ such that $\epsilon^2=-1$. Suppose that there exists a nonconstant polynomial $f(x)\in\mathbb{F}_q[x]$ such that $f(x)f^{*}(x)$ is a factor of $x^r-1$ over $\mathbb{F}_q$. Without loss of generality, we can assume that $f(x)$ is monic. Then we can write $$x^r-1=f(x)f^{*}(x)p(x),$$ for some $p(x)\in\mathbb{F}_q[x]$. Taking the reciprocal of both sides, we obtain $-x^r+1=f(x)f^{*}(x)p^{*}(x)$, which implies $p^{*}(x)=-p(x)$. Choose $b(x)=- f(x)p(x)$, $l(x)=f(x)$, and $a(x)=\epsilon f(x)$. Here, $\gcd(b(x),l(x))= f(x)$, and we obtain
		$$\gcd(b(x),l(x))b^{*}(x)=[f(x)][-f^{*}(x)p^{*}(x)]=f(x)f^{*}(x)p(x)=x^r-1$$ and $$\frac{a(x)a^{*}(x)b(x)}{\gcd(b(x),l(x))}=\frac{[\epsilon f(x)][\epsilon f^{*}(x)][- f(x)p(x)]}{ f(x)}=-\epsilon^2 f(x)f^{*}(x)p(x)=x^r-1.$$
		Moreover, it can also be shown that $\{(b(x),0),(l(x),a(x))\}$ is an orthogonal set.
	\end{proof}
	
	By combining Lemma \ref{sddcc} and Lemma \ref{resgcd}, the existence of nonseparable self-dual double cyclic codes of length $(r,r)$ over $\mathbb{F}_q$ can be characterized in the following theorem.
	\begin{theorem}\label{kareks}
		There exists a self-dual double cyclic code of length $(r,r)$ over $\mathbb{F}_q$ if and only if these two conditions are satisfied:
		\begin{itemize}
			\item $-1$ is a square in $\mathbb{F}_q$, i.e. there exists $\epsilon\in\mathbb{F}_q$ such that $\epsilon^2=-1$, and
			\item there exists a nonconstant polynomial $f(x)\in\mathbb{F}_q[x]$ such that $f(x)f^{*}(x)$ is a factor of $x^r-1$ over $\mathbb{F}_q$.
		\end{itemize}
	\end{theorem}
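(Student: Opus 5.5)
The plan is to combine Lemma \ref{sddcc}, Lemma \ref{resgcd}, Theorem \ref{existsep} and Theorem \ref{charsddc}, treating the separable and nonseparable cases separately in the forward direction.

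For ($\Leftarrow$), assume $\epsilon^2=-1$ in $\mathbb{F}_q$ and that $f(x)$ is nonconstant with $f(x)f^{*}(x)\mid x^r-1$. I would reuse the construction in the proof of Lemma \ref{resgcd}. Write $x^r-1=f(x)f^{*}(x)p(x)$ with $p^{*}(x)=-p(x)$, and set
\[
b(x)=-f(x)p(x),\qquad l(x)=f(x),\qquad a(x)=\epsilon f(x).
\]
Then $d(x)=\gcd(b(x),l(x))=f(x)$ (up to a unit), $\beta(x)=-p(x)$ and $\lambda(x)=1$. Lemma \ref{resgcd} already verifies the first two conditions of Theorem \ref{charsddc}. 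The third condition holds trivially, since $1-\lambda\lambda^{*}=0$. Theorem \ref{charsddc} then shows that $\langle (b(x),0),(l(x),a(x))\rangle$ is a self-dual double cyclic code of length $(r,r)$.

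For ($\Rightarrow$), let $C=\langle (b(x),0),(l(x),a(x))\rangle$ be self-dual. I first show that $-1$ is a square in $\mathbb{F}_q$.
\begin{itemize}
\item If $C$ is nonseparable, this is exactly Lemma \ref{sddcc}.
\item If $C$ is separable, Theorem \ref{existsep} forces $q$ to be a power of $2$, so $-1=1=1^2$.
\end{itemize}
Next I need the polynomial $f(x)$. Theorem \ref{charsddc} gives
\[
d(x)d^{*}(x)\beta^{*}(x)=x^r-1=a(x)a^{*}(x)\beta(x).
\]
So both $d(x)d^{*}(x)$ and $a(x)a^{*}(x)$ divide $x^r-1$. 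If $d(x)$ is nonconstant I take $f=d$; otherwise, if $a(x)$ is nonconstant, I take $f=a$. In the separable case there is a shortcut: $r$ is even and $q$ is even, so $f(x)=x+1$ already works, because $(x+1)(x+1)^{*}=(x+1)^2$ divides $(x^{r/2}+1)^2=x^r-1$.

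The main obstacle is the remaining degenerate case in which both $d(x)$ and $a(x)$ are constants. Then $\beta$ is associate to $x^r-1$, so $b(x)$ is associate to $x^r-1$ (that is, $b=0$ in the quotient ring). In this case $C=\{(t(x)l(x),\,c\,t(x))\}$ is the graph of multiplication by $c^{-1}l(x)$, and self-duality reduces to
\[
l(x)l^{*}(x)\equiv -c^2 \pmod{x^r-1}.
\]
I would try to extract a nonconstant $f$ from this congruence by factoring $x^r-1$ into reciprocal-pair and self-reciprocal irreducibles. Evaluating at roots $\zeta$ of $x^r-1$ gives $l(\zeta)l(\zeta^{-1})=-c^2$, and I would attempt to use this to locate a factor $g$ with $g\,g^{*}\mid x^r-1$. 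I expect this case to fail in general, most clearly when $r=1$. Over $\mathbb{F}_5$, the code spanned by $(1,2)$ is self-dual of length $(1,1)$, since $1^2+2^2=5=0$. Yet no nonconstant $f$ has $f(x)f^{*}(x)\mid x-1$, because $\deg(ff^{*})\ge 2>1$. So at this step I would check carefully whether the intended reading excludes the degenerate $b\equiv 0$ situation (for instance, via a convention that $d(x)$ is nonconstant). If it does not, this case genuinely breaks the forward direction as stated for small $r$.
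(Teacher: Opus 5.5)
Your proposal follows the paper's own route. The paper proves this theorem only by combining Lemma~\ref{sddcc} with Lemma~\ref{resgcd}, and the ($\Leftarrow$) construction $b=-fp$, $l=f$, $a=\epsilon f$ in Lemma~\ref{resgcd} is exactly yours. Your appeal to Theorem~\ref{existsep} for separable codes also covers a case that Lemma~\ref{sddcc}, stated only for nonseparable codes, leaves out. Your ($\Leftarrow$) direction and the ``$-1$ is a square'' half of ($\Rightarrow$) are correct.

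The step you could not complete fails because the statement is false, not because your argument is missing an idea. The paper gets past your degenerate case only through the sentence ``it is clear that $d(x)$ and $a(x)$ are nonconstant polynomials'' in the proof of Lemma~\ref{resgcd}, and that sentence is wrong. Your $(1,1)$ code over $\mathbb{F}_5$ is a valid counterexample. (Your degree bound tacitly uses $f(0)\neq 0$, which holds because $f\mid x^r-1$.)

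The failure is not confined to small $r$, and nothing in the paper excludes it. Theorem~\ref{gendouble} allows $b=x^r-1$, and restricting to nonseparable codes does not help. Whenever $\epsilon^2=-1$, the code $C=\{(\epsilon t(x),t(x))\}$ is self-dual of length $(r,r)$ for every $r$, since $(\epsilon u,u)\cdot(\epsilon v,v)=(\epsilon^2+1)\,u\cdot v=0$ and $\dim C=r$. Both of its projections are the whole ring, so it is nonseparable. In the paper's normal form it is generated by $(1-x^r,0)$ and $(-1,\epsilon)$, with $d=1$, $\beta=1-x^r$ and $\lambda=-1$. It therefore satisfies all three conditions of Theorem~\ref{charsddc}, and the equations of Lemma~\ref{resgcd}, with $d$ and $a$ constant. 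For example, $\{(u,u)\}$ of length $(3,3)$ over $\mathbb{F}_2$ and $\{(2u,u)\}$ of length $(6,6)$ over $\mathbb{F}_5$ are self-dual. This contradicts the Remark after Corollary~\ref{teo51}.

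Self-duality gives $\deg a=r-\deg b=\deg d$, using $\dim C=2r-\deg a-\deg b=r$ and $d b^{*}\sim x^r-1$. So the degenerate case is exactly $\deg d=\deg a=0$, i.e.\ $b\sim x^r-1$, and your branch ``otherwise take $f=a$'' never occurs. Your argument with $f=d$ does prove the theorem if one restricts to codes with $\deg b<r$, i.e.\ codes containing a nonzero word $(u,0)$. For arbitrary codes the second condition must be dropped, and existence is then equivalent to $-1$ being a square in $\mathbb{F}_q$.
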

	Theorem \ref{kareks} already provides the necessary and sufficient conditions for the existence of self-dual double cyclic codes of length $(r,r)$ over $\mathbb{F}_q$, which depend on the factorization of $x^r-1$ over $\mathbb{F}_q$. If the characteristic of $\mathbb{F}_q$ is $p$, then $q=p^m$, for some $m\in\mathbb{N}$.
	Let $r$ be a multiple of $p$, i.e. $r=\alpha p$, for some $\alpha\in\mathbb{N}$. We have $$x^r-1=x^{\alpha p}-1=(x^\alpha-1)^p=(x-1)^p (1+x+\cdots+x^{\alpha-1})^p.$$
	Here, $f(x)=x-1$ satisfies the second condition of Theorem \ref{kareks} since $f(x)f^{*}(x)=-(x-1)^2$ divides $x^r-1$. Therefore, in this case, such a self-dual double cyclic code exists.
	\begin{corollary}\label{teo51}
		Let $\mathbb{F}_{p^m}$ be a finite field such that $-1$ is a square in $\mathbb{F}_{p^m}$. For any $\alpha\in\mathbb{N}$, there exists a nonseparable self-dual double cyclic code of length $(\alpha p,\alpha p)$ over $\mathbb{F}_{p^m}$.
	\end{corollary}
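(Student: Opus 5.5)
The plan is to apply Theorem \ref{kareks}, or more concretely the construction in the proof of Lemma \ref{resgcd}, with $r=\alpha p$ and $f(x)=x-1$. Then I will check separately that the resulting code is nonseparable, since Theorem \ref{kareks} only asserts that some self-dual double cyclic code exists.

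First, the factorization. Since $r=\alpha p$, in characteristic $p$ we have $x^r-1=(x^\alpha-1)^p$. Because $p\ge 2$, the factor $(x-1)^2$ divides $x^r-1$. With $f(x)=x-1$ we get $f^{*}(x)=1-x$, so $f(x)f^{*}(x)=-(x-1)^2$, which divides $x^r-1$. Write $x^r-1=f(x)f^{*}(x)p(x)$ with $p(x)=-(x^r-1)/(x-1)^2$; the degree of $p(x)$ is $r-2$. The first hypothesis of Theorem \ref{kareks} (that $-1$ is a square) holds by assumption. Fix $\epsilon$ with $\epsilon^2=-1$ and, following the proof of Lemma \ref{resgcd}, take
\[
b(x)=-f(x)p(x),\qquad l(x)=f(x),\qquad a(x)=\epsilon f(x).
\]
That lemma gives $d(x)=\gcd(b(x),l(x))=f(x)$, $d(x)b^{*}(x)=x^r-1$ and $a(x)a^{*}(x)\beta(x)=x^r-1$, where $\beta(x)=-p(x)$ and $\lambda(x)=1$. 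The third condition of Theorem \ref{charsddc} is automatic, since $1-\lambda\lambda^{*}=0$. So by Theorem \ref{charsddc} the code $C=\langle (b(x),0),(l(x),a(x))\rangle$ is self-dual. Its length is $(\alpha p,\alpha p)$ over $\mathbb{F}_{p^m}$.

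The main point is nonseparability, because the proof of Theorem \ref{charsddc} shows that $C$ collapses to a separable code whenever $\beta$ is a unit. Here $\deg\beta=r-2$, which is positive when $r\ge 3$. For a nonconstant $\beta$ I argue as follows. For $C$, the polynomial $b(x)$ generates the kernel part $C'$, while $\pi_r(C)$ contains $l(x)=x-1$, so $\pi_r(C)=\langle \gcd(b,l)\rangle=\langle x-1\rangle$. This is strictly larger than $C'=\langle b\rangle$ because $\deg b=r-1>1$. Equality $C'=\pi_r(C)$ is necessary for separability, so $C$ is not separable.

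There is a degenerate case $r=2$, which occurs only when $p=2$ and $\alpha=1$. Then $\beta$ is constant and the construction above is separable. I would handle it directly. Over characteristic $2$, the element $-1=1$ is always a square. With $r=s=2$, take the code generated by $(1+x,\,0)$ and $(1,1)$; its generator matrix is
\[
\begin{pmatrix}1&1&0&0\\ 1&0&1&0\\ 0&1&0&1\end{pmatrix}.
\]
This has too many rows for a self-dual code of length $4$, so I would instead search among the small cases. The cleanest fix is to check whether the statement intends $r\ge 3$ or allows $\alpha p$ with $\alpha\ge 2$ when $p=2$. The example $((1+x)^3,0),(1+x,1+x)$ at length $(4,4)$ is of exactly the above type. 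I expect this edge case, not the main construction, to be the only real obstacle, and it may need to be excluded or verified by hand.
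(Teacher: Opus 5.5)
Your main case is sound and follows the paper's route, namely the construction of Lemma \ref{resgcd} with $f(x)=x-1$. Your check that $\pi_r(C)=\langle x-1\rangle\supsetneq C'=\langle b(x)\rangle$ when $r\ge 3$ actually supplies the nonseparability argument that the paper's one-line derivation from Theorem \ref{kareks} omits.

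The gap is the case $p=2$, $\alpha=1$, i.e.\ length $(2,2)$ over $\mathbb{F}_{2^m}$, which you leave open. There the $f=x-1$ construction gives $b=l=a=x+1$. So $\beta$ is a unit and the code is the separable $\langle x+1\rangle\times\langle x+1\rangle$, as you observed; the paper's own construction has the same defect here. Your attempted replacement $\langle(1+x,0),(1,1)\rangle$ has dimension $3$. Your suggestion that the case ``may need to be excluded'' is also wrong, because the statement is true there.

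The missing idea is to allow $b(x)=x^r-1$, i.e.\ $C'=\{0\}$, and take $l(x)=a(x)=1$. The code $C=\langle (1,1)\rangle=\{(u_0,u_1\mid u_0,u_1)\}$ then has the following properties:
\begin{itemize}
\item it is double cyclic and has dimension $2$;
\item it is self-orthogonal, because $(u\mid u)\cdot(v\mid v)=2\,u\cdot v=0$ in characteristic $2$;
\item it is nonseparable, because $\pi_r(C)=\mathbb{F}_{2^m}^2$ while $C'=\{0\}$.
\end{itemize}
In the notation of Theorem \ref{charsddc} this is $d=1$, $\beta=x^2-1$, $\lambda=1$.

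More generally, with $\epsilon^2=-1$ the code $\{(u\mid \epsilon u): u\in\mathbb{F}_q^{\,r}\}$ is a nonseparable self-dual double cyclic code of length $(r,r)$ for every $r\ge 1$. Indeed $(u\mid\epsilon u)\cdot(v\mid\epsilon v)=(1+\epsilon^2)\,u\cdot v=0$, $\dim C=r$, and $C'=\{0\}\neq\pi_r(C)$. This proves the corollary uniformly, with no factorization of $x^{\alpha p}-1$ and no edge case.
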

	
	Self-dual double cyclic codes of length $(r,r)$ over $\mathbb{F}_q$ can be constructed as in the proof of Lemma \ref{resgcd}. If $f(x)\in\mathbb{F}_q[x]$ is a nonconstant polynomial such that $f(x)f^{*}(x)$ is a factor of $x^r-1$, then we can construct a self-dual double cyclic code of length $(r,r)$ over $\mathbb{F}_q$ generated by $(b(x),0)$ and $(l(x),a(x))$, where $b(x)=-\frac{x^r-1}{f^{*}(x)}$, $l(x)=f(x)$, and $a(x)=\epsilon f(x)$.
	\begin{remark}
		It is a well-known fact that a factorization of $x^r-1$ over $\mathbb{F}_{p^m}$ has repeated irreducible factors if and only if $p$ divides $r$. Thus, in the case where $r$ is not a multiple of $p$, in the factorization of $x^r-1$ over $\mathbb{F}_{p^m}$, each of its irreducible factors appears only once.
		
		If $f(x)$ is an irreducible factor of $x^r-1$ other than $x-1$, then both $f(x)$ and $f^{*}(x)$ are factors of $1+x+\cdots+x^{r-1}$. Consequently, $f(x)f^{*}(x)$ is also a factor of $x^r-1$, unless $f(x)$ is self-reciprocal. In other words, there does not exist a self-dual double cyclic code of length $(r,r)$ over $\mathbb{F}_{p^m}$, where $r$ is not a multiple of $p$, if $1+x+\cdots+x^{r-1}$ only has self-reciprocal factors.
		
		As examples, there does not exist a self-dual double cyclic code of length $(3,3)$ over $\mathbb{F}_2$ since $1+x+x^2$ is irreducible over $\mathbb{F}_2$, and there does not exist a self-dual double cyclic code of length $(6,6)$ over $\mathbb{F}_5$ since $1+x+x^2+x^3+x^4+x^5=(1+x)(1+x+x^2)(1+4x+x^2)$.
	\end{remark}
	We conclude this section with some explicit examples of self-dual double cyclic codes of length $(r,r)$ over some specific finite fields $\mathbb{F}_q$. We will see some examples of such codes over $\mathbb{F}_2$, $\mathbb{F}_4$, and $\mathbb{F}_5$. In $\mathbb{F}_2$ and $\mathbb{F}_4$, $-1$ is the square of 1, while in $\mathbb{F}_5$, $-1$ is the square of 2 and 3. All minimum distances of the codes obtained in the following examples are determined using the free MAGMA calculator (\url{https://magma.maths.usyd.edu.au/calc/}). 
	\begin{example}\label{ex27}
		Here $q=2$ and $r=7$. Since $x^7-1=(1+x)(1+x+x^3)(1+x^2+x^3)$ in $\mathbb{F}_2[x]$, both $1+x+x^3$ and $1+x^2+x^3$ satisfy the second condition of Theorem \ref{kareks}. For $f(x)=1+x+x^3$, we can choose $a(x)=l(x)=f(x)=1+x+x^3$ and $b(x)=\frac{x^7-1}{f^{*}(x)}=\frac{x^7-1}{1+x^2+x^3}=1+x^2+x^3+x^4$. Then we obtain a self-dual double cyclic code of length $(7,7)$ over $\mathbb{F}_2$ generated by $(b(x),0)=(1+x^2+x^3+x^4,0)$ and $(l(x),a(x))=(1+x+x^3,1+x+x^3)$. As a binary linear code of length 14, it is generated by
		\begin{align*}
			\{&1011100\; 0000000,0101110\; 0000000,0010111\; 0000000,1101000\; 1101000,\\
			&0110100\; 0110100,0011010\; 0011010,0001101\; 0001101\}.
		\end{align*}
		In a similar way, for $f(x)=1+x^2+x^3$, we obtain another self-dual double cyclic code of length $(7,7)$ over $\mathbb{F}_2$ generated by $(1+x+x^2+x^4,0)$ and $(1+x^2+x^3,1+x^2+x^3)$. As a binary linear code of length 14, it is generated by
		\begin{align*}
			\{&1110100\; 0000000,0111010\; 0000000,0011101\; 0000000,1011000\; 1011000,\\
			&0101100\;0101100,0010110\;0010110,0001011\;0001011\}.
		\end{align*}
		Moreover, both codes have minimum distance 4. From \cite{Grassl}, the largest possible minimum distance of binary $[14,7]$-codes is 4. Therefore, the two codes obtained in this example are optimal codes.
	\end{example}
	
	\begin{example}\label{exampleF4}
		Here $q=4$ and $r=3$. In $\mathbb{F}_4[x]$, $x^3-1=(1+x)(a+x)(a^2+x)$. Since the reciprocal of $f_1(x)=a+x$ is $f_1^{*}(x)=1+ax=a(a^2+x)$, this $f_1(x)$ satisfies the second condition of Theorem \ref{kareks}. We can choose $a(x)=l(x)=f_1(x)=a+x$ and $b(x)=\frac{x^3-1}{f_1^{*}(x)}=\frac{x^3-1}{1+ax}=a^2(1+x)(a+x)=1+ax+a^2 x^2$. Then we obtain a self-dual double cyclic code of length $(3,3)$ over $\mathbb{F}_4$ generated by $(b(x),0)=(1+a x+a^2 x^2,0)$ and $(l(x),a(x))=(a+x,a+x)$. As a linear code of length 6 over $\mathbb{F}_4$, it is generated by
		$$\{1aa^2\; 000,a10\; a10,0a1\; 0a1\}.$$
		Furthermore, the polynomial $f_2(x)=a^2+x$ also satisfies the second condition of Theorem \ref{kareks}. In a similar way, we obtain another self-dual double cyclic code of length $(3,3)$ over $\mathbb{F}_4$ generated by $(1+a^2x+a x^2,0)$ and $(a^2+x,a^2+x)$. As a linear code of length 6 over $\mathbb{F}_4$, it is generated by
		$$\{1a^2a\; 000,a^210\; a^210,0a^21\; 0a^21\}.$$
		Moreover, both codes have minimum distance 3. Based on \cite{Harada}, these codes are optimal self-dual codes.
	\end{example}
	
	\begin{example}\label{exampleF5b}
		Here $q=5$ and $r=5$. In $\mathbb{F}_5[x]$, $x^5-1=(4+x)^5.$ There are two factors of $x^5-1$ that satisfy the second condition of Theorem \ref{kareks}, which are $f_1(x)=4+x$ and $f_2(x)=(4+x)^2=1+3x+x^2$. Note that there are two $\epsilon\in\mathbb{F}_5$ such that $\epsilon^2=-1$, which are $\epsilon=2$ and $\epsilon=3$. Thus, both are considered when constructing the code.
		
		For $f_1(x)=4+x$, if $\epsilon=2$, then we can choose $b(x)=-\frac{x^5-1}{f_1^{*}(x)}=4\frac{x^5-1}{1+4x}=1+x+x^2+x^3+x^4$, $l(x)= f_1(x)=4+x$, and $a(x)=\epsilon f_1(x)=2(4+x)=3+2x$. Here we obtain a self-dual double cyclic code of length $(5,5)$ over $\mathbb{F}_5$ generated by $\{(1+x+x^2+x^3+x^4,0),(4+x,3+2x)\}$. If $\epsilon=3$, then we obtain a self-dual double cyclic code of length $(5,5)$ over $\mathbb{F}_5$ generated by $\{(1+x+x^2+x^3+x^4,0),(4+x,2+3x)\}$.
		
		In a similar way, for $f_2(x)=1+3x+x^2$, we obtain two self-dual double cyclic codes of length $(5,5)$ over $\mathbb{F}_5$ generated by $\{(1+2x+3x^2+4x^3,0),(1+3x+x^2,2+x+2x^2)\}$ and $\{(1+2x+3x^2+4x^3,0),(1+3x+1x^2,3+4x+3x^2)\}$.
		
		As vectors of length 10 over $\mathbb{F}_5$, the corresponding generating sets are
		\begin{itemize}
			\item $\{11111\;00000,41000\;32000,04100\;03200,00410\;00320,00041\;00032\},$
			\item $\{11111\;00000,41000\;23000,04100\;02300,00410\;00230,00041\;00023\},$
			\item $\{12340\;00000,01234\;00000,13100\;21200,01310\;02120,00131\;00212\},$
			\item $\{12340\;00000,01234\;00000,13100\;34300,01310\;03430,00131\;00343\}.$
		\end{itemize}
		It can be shown that the four above sets are orthogonal sets over $\mathbb{F}_5$.
		Moreover, all these four codes have minimum distance 4. Based on \cite{Harada}, all these codes are optimal self-dual codes.
	\end{example}
	
	\section{Self-dual double cyclic codes of length $(r,2r)$ and $(2r,r)$ over $\mathbb{F}_{q}$}
	
	In this section, we examine the existence of self-dual double cyclic codes of length $(r,2r)$ and $(2r,r)$ over a finite field $\mathbb{F}_q$. Note that in the case of length $(r,2r)$, the orthogonality between $(u(x),u'(x))$ and $(v(x),v'(x))$ can be simplified to $$(u(x),u'(x))\circ (v(x),v'(x))=\left[u(x)v^{*}(x)\Mod{x^r-1}\right](x^r+1)+u'(x)v'^{*}(x)\Mod{x^{2r}-1}=0,$$
	and in the case of length $(2r,r)$, the orthogonality between $(u(x),u'(x))$ and $(v(x),v'(x))$ can be simplified to $$(u(x),u'(x))\circ (v(x),v'(x))=u(x)v^{*}(x)\Mod{x^{2r}-1}+\left[u'(x)v'^{*}(x)\Mod{x^r-1}\right](x^r+1)=0.$$
	We start with the special case of codes over a finite field of characteristic 2. We provide a characterization for the existence, followed by explicit examples.
	\begin{lemma}\label{case1}
		Let $\mathbb{F}_q$ be a finite field of characteristic 2. There exist 
		self-dual double cyclic codes of length $(r,2r)$ and $(2r,r)$ over $\mathbb{F}_q$ if and only if $r$ is even.
	\end{lemma}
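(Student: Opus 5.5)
Both directions can be handled with results already in the paper plus a dimension count, without going through the polynomial conditions of Theorem \ref{charsddc}.

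\emph{Sufficiency.} Suppose $r$ is even. Then both $r$ and $2r$ are even. Since $\mathbb{F}_q$ has characteristic $2$, $q$ is a power of $2$. Theorem \ref{existsep} therefore gives a separable self-dual double cyclic code of length $(r,2r)$. Explicitly, it is generated by $(x^{r/2}+1,0)$ and $(0,x^{r}+1)$, and each generator comes from a trivial self-dual cyclic code of the appropriate length. Swapping the roles of the two components gives a separable self-dual code of length $(2r,r)$, generated by $(x^{r}+1,0)$ and $(0,x^{r/2}+1)$. So existence for even $r$ holds in both orders.

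\emph{Necessity.} Suppose a self-dual double cyclic code $C$ of length $(r,2r)$ or $(2r,r)$ exists. In either case, viewed as a linear code over $\mathbb{F}_q$, $C$ has length $n=r+2r=3r$. For any linear code, $\dim C+\dim C^{\perp}=n$, so $C=C^{\perp}$ forces $\dim C=n/2=3r/2$. Hence $3r$ is even, and so $r$ is even.

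The argument uses only this parity of the total length, so it does not depend on the characteristic.

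If one instead wanted a proof internal to the double-cyclic machinery, the route would be Theorem \ref{charsddc}. The conditions are $d(x)b^*(x)=x^r-1$ and $a(x)a^*(x)\beta(x)=x^{2r}-1$, or the analogous pair for length $(2r,r)$. Degree counting then gives $\deg b+\deg a=3r/2$, consistent with the dimension formula $r+s-\deg b-\deg a$, and the same parity obstruction appears. I expect no genuine obstacle here. The only point needing care is to note explicitly that the dimension argument covers both orders $(r,2r)$ and $(2r,r)$ at once, since the total length is $3r$ in each.
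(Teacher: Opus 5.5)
Your proof is correct, and it is complete for the statement as written. The necessity half is the paper's argument: a self-dual code has even length, and the total length is $3r$ in both orders.

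For sufficiency you take a different route from the paper. You invoke Theorem~\ref{existsep} and use the separable code $\langle x^{r/2}+1\rangle\times\langle x^{r}+1\rangle$, or its swap. This needs only two elementary facts: in characteristic $2$, $\langle x^{n/2}+1\rangle$ is a self-dual cyclic code for even $n$, and a direct product of self-dual codes is self-dual. Nothing depends on the $\circ$-product computations or on Theorem~\ref{charsddc}.

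The paper instead writes down explicit generators: $b(x)=1+x+\cdots+x^{r-1}$, $l(x)=1+x$, $a(x)=(1+x)(1+x^{r/2})$ for length $(r,2r)$, and the analogous triple for $(2r,r)$. These are the instance $f(x)=1+x$, $e(x)=1+x^{r/2}$ of the template of Theorem~\ref{caser2r}. The paper then only asserts that orthogonality ``can be verified''. What this construction buys is nonseparability. In characteristic $2$ we have $b(x)=(1+x)(1+x+\cdots+x^{r/2-1})^2$, so $\gcd(b,l)=1+x$ and $\beta(x)=(1+x+\cdots+x^{r/2-1})^2$. This $\beta$ is nonconstant once $r\ge 4$, so codes such as Example~\ref{ex52} are nonseparable, which is the case the paper says it is interested in. Your argument makes the lemma an immediate corollary of Theorem~\ref{existsep} and produces no nonseparable examples.

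Your literal reading is nevertheless the one under which the equivalence actually holds. For $r=2$ the paper's construction has $\beta=1$, so it is separable. Moreover, no nonseparable self-dual code of length $(2,4)$ exists in characteristic $2$:
\begin{itemize}
\item For such a $C$, the code $\{p:(p,0)\in C\}$ equals $\pi_r(C)^{\perp}$. It is a self-orthogonal cyclic code of length $2$ different from $\pi_r(C)$.
\item The only other self-orthogonal candidate, $\langle 1+x\rangle$, is self-dual. So this code is $0$, and hence $\pi_r(C)=\mathbb{F}_q^2$.
\item Since $\dim C=3$, we get $\dim\pi_s(C)=3$. Then $\pi_s(C)^{\perp}$ is a one-dimensional cyclic code of length $4$, namely the repetition code, so $\pi_s(C)$ is the zero-sum code.
\item A codeword $((1,0),v)$ then has self-inner product $1+(\sum_j v_j)^2=1\neq 0$, contradicting self-duality.
\end{itemize}
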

	\begin{proof}
		In both cases, the total length is $3r$, and since self-dual codes are always of even length, $r$ is necessarily even. Conversely, if $r$ is even, then in a finite field of characteristic 2 we have $x^r-1=x^r+1=(x^{r/2}+1)^2$. Since $x+1$ divides $x^{r/2}+1$, $(x+1)^2$ divides $x^r+1$. To be precise, $x^r+1=(x^{r/2}+1)^2=(1+x)^2(1+x+\cdots+x^{r/2-1})^2$.
		
		For the case of length $(r,2r)$, choose $a(x)=(1+x)(1+x^{r/2})$, $b(x)=1+x+\cdots+x^{r-1}=(1+x)(1+x+\cdots+x^{r/2-1})^2$, and $l(x)=1+x$. It can be verified that $\{(b(x),0),(l(x),a(x))\}$ is an orthogonal set over $\mathbb{F}_q$. Therefore, the code generated by $(b(x),0)$ and $(l(x),a(x))$ above is a self-dual double cyclic code of length $(r,2r)$ over $\mathbb{F}_q$.
		
		For the case of length $(2r,r)$, choose $a(x)=1+x$, $b(x)=(1+x+\cdots+x^{r-1})(1+x^{r/2})$, and $l(x)=(1+x)(1+x^{r/2})$. It can be shown in a similar way that the code generated by $(b(x),0)$ and $(l(x),a(x))$ is a self-dual double cyclic code of length $(2r,r)$ over $\mathbb{F}_q$.
	\end{proof}
	\begin{example}\label{ex52}
		Here $q=2$ and $r=4$. The self-dual double cyclic codes of length $(4,8)$ and $(8,4)$ over $\mathbb{F}_2$ described in the proof of Lemma \ref{case1} are generated by $\{(1+x+x^2+x^3,0),(1+x,1+x+x^2+x^3)\}$ and $\{(1+x+x^4+x^5,0),(1+x+x^2+x^3,1+x)\}$, respectively. Moreover, both codes have minimum distance 4. From \cite{Grassl}, the largest possible minimum distance of binary $[12,6]$-codes is 4. Therefore, the two codes obtained in this example are optimal codes.
	\end{example}
	Now we discuss the case of codes over a finite field of odd (prime) characteristic. It turns out that this case is also related to self-dual negacyclic codes. Therefore, we need to recall some basic facts about negacyclic codes. Structurally, a negacyclic code of length $n$ over $\mathbb{F}_{q}$ is an ideal of $\mathbb{F}_{q}[x]/\langle x^n+1\rangle$ \cite{Guenda2012,Jitman2020}. Moreover, similar to classical cyclic codes, a negacyclic code of length $n$ over $\mathbb{F}_{q}$ is generated by a polynomial which is a factor of $x^n+1$.
	\begin{lemma}\label{sddcc2}
		Let $\mathbb{F}_q$ be a finite field of odd characteristic. If there exists a nonseparable self-dual double cyclic code of length $(r,2r)$ and $(2r,r)$ over $\mathbb{F}_q$, then $-2$ is a square in $\mathbb{F}_q$.
	\end{lemma}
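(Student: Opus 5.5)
The plan is to follow the proof of Lemma \ref{sddcc}: turn the conditions of Theorem \ref{charsddc} into one polynomial identity, strip off the factor $x-1$, and evaluate at $x=1$. Let $C$ be generated by $(b(x),0)$ and $(l(x),a(x))$. Write $d(x)=\gcd(b(x),l(x))$ and $b(x)=d(x)\beta(x)$.

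For length $(r,2r)$, Theorem \ref{charsddc} gives
$$d(x)d^{*}(x)\beta^{*}(x)=x^r-1 \quad\text{and}\quad a(x)a^{*}(x)\beta(x)=x^{2r}-1.$$
Since $d(x)d^{*}(x)$ is self-reciprocal, taking reciprocals in the first identity gives $d(x)d^{*}(x)\beta(x)=1-x^r$. Comparing the two expressions then yields $\beta^{*}(x)=-\beta(x)$. Now substitute into the second identity:
$$a(x)a^{*}(x)\beta(x)=(x^r-1)(x^r+1)=d(x)d^{*}(x)\beta^{*}(x)(x^r+1)=-d(x)d^{*}(x)\beta(x)(x^r+1).$$
Cancelling the nonzero $\beta(x)$ gives
$$a(x)a^{*}(x)=-d(x)d^{*}(x)(x^r+1).$$

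For length $(2r,r)$ the same manipulation gives $d(x)d^{*}(x)\beta^{*}(x)=x^{2r}-1$ and $a(x)a^{*}(x)\beta(x)=x^r-1$. Reciprocating the second identity gives $\beta^{*}=-\beta$ as before, and then
$$d(x)d^{*}(x)=-a(x)a^{*}(x)(x^r+1).$$
So both cases reduce to an identity of the form $g g^{*}=-h h^{*}(x^r+1)$, with $(g,h)=(a,d)$ in the first case and $(g,h)=(d,a)$ in the second.

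Since the characteristic is odd, $x^r+1$ takes the value $2\neq 0$ at $x=1$, so $x-1$ does not divide it. Hence $x-1$ has the same multiplicity $k$ in $g$ and in $h$. Write $g=(x-1)^k\gamma$ and $h=(x-1)^k\eta$ with $\gamma(1),\eta(1)\neq 0$. The reciprocals are $g^{*}=(1-x)^k\gamma^{*}$ and $h^{*}=(1-x)^k\eta^{*}$, so the factor $(x-1)^{2k}(-1)^k$ appears on both sides and cancels:
$$\gamma\gamma^{*}=-\eta\eta^{*}(x^r+1).$$
Using $f(1)=f^{*}(1)$ and evaluating at $x=1$ gives $\gamma(1)^2=-2\,\eta(1)^2$. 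Therefore $-2=(\gamma(1)/\eta(1))^2$ is a square in $\mathbb{F}_q$.

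The main care point is the multiplicity argument. It is exactly where odd characteristic enters: it guarantees $x^r+1$ is coprime to $x-1$, and hence that $\eta(1)\neq 0$. The rest is routine bookkeeping of signs under reciprocals. Nonseparability is not really needed beyond $\beta\neq 0$, which always holds.
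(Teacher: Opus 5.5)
Your proof is correct and follows the paper's argument: you derive $\beta^{*}=-\beta$ from Theorem~\ref{charsddc}, combine the two conditions into $a a^{*}=-d d^{*}(x^r+1)$ (or its $(2r,r)$ analogue), strip the common power of $x-1$, and evaluate at $x=1$. Your version is slightly more careful than the paper's. You write a polynomial identity instead of a quotient, you treat the $(2r,r)$ case explicitly, and you state that odd characteristic gives $x-1\nmid x^r+1$, which is what forces the multiplicities of $x-1$ on the two sides to agree.
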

	\begin{proof}
		We prove for the case of length $(r,2r)$, since the other case of length $(2r,r)$ can be proved in a similar way. From Theorem \ref{charsddc}, to construct a self-dual double cyclic code of length $(r,2r)$, we need to find polynomials $a(x)$, $b(x)$, and $l(x)$ over $\mathbb{F}_q[x]$ such that $$\gcd(b(x),l(x)) b^{*}(x) = x^r-1$$
		and $$\frac{a(x)a^{*}(x)b(x)}{\gcd(b(x),l(x))}=x^{2r}-1.$$
		Let $\gcd(b(x),l(x))=d(x)$ and $b(x)=d(x)\beta(x)$, for some $\beta(x)\in\mathbb{F}_q[x]$. As in the proof of Lemma \ref{sddcc}, here $\beta^{*}(x)=-\beta(x)$, and hence $$x^r+1=\frac{x^{2r}-1}{x^r-1}=\frac{a(x)a^{*}(x)\beta(x)}{d(x)d^{*}(x)\beta^{*}(x)}=-\frac{a(x)a^{*}(x)}{d(x)d^{*}(x)}.$$
		Similar to the proof of Lemma \ref{sddcc}, by cancelling any factors of $x-1$, we have $x^r+1=-\frac{\alpha(x)\alpha^{*}(x)}{\delta(x)\delta^{*}(x)}$. Taking $x=1$, we obtain $-2=\left(\frac{\alpha(1)}{\delta(1)}\right)^2$, i.e. $-2$ is a square in $\mathbb{F}_q$.
	\end{proof}
	\begin{remark}
		Using some facts in number theory, for finite fields of odd characteristic, $-2$ is a square in $\mathbb{F}_q$ if and only if $q\equiv 1\Mod{8}$ or $q\equiv 3\Mod{8}$.
	\end{remark}
	Based on these factorizations, $d(x)$ and $a(x)$ in the proof of Lemma \ref{sddcc2} can be chosen such that $d(x)$ divides $a(x)$ (for the case of length $(2r,r)$, $a(x)$ divides $d(x)$). Then $x^r+1=-e(x)e^{*}(x)$, where $e(x)=\frac{a(x)}{d(x)}$ (for the case of length $(2r,r)$, $e(x)=\frac{d(x)}{a(x)}$). Here, $e(x)$ is actually a generator polynomial of a self-dual negacyclic code. If we assume that $e(x)$ is a monic polynomial, then the constant term of $e(x)$ must be $-1$. It is important to note that not every self-dual negacyclic code has a monic generator polynomial with constant term $-1$. The following lemma provides the necessary and sufficient conditions for the existence of such $e(x)$.
	\begin{lemma}\label{charnega}
		Let $\mathbb{F}_q$ be a finite field of odd characteristic. There exists a polynomial $e(x)\in\mathbb{F}_q[x]$ such that $-e(x)e^{*}(x)=x^r+1$ if and only if $-2$ is a square in $\mathbb{F}_q$ and $r$ is divisible by 4.
	\end{lemma}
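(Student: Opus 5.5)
The plan is to prove the two directions separately: necessity by looking at degrees and evaluating at $x=1$ and $x=-1$, and sufficiency by an explicit factorization of $x^4+1$ into two mutually reciprocal quadratics.

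\textbf{Sufficiency.} Suppose $-2=u^2$ for some $u\in\mathbb{F}_q$ and $r=4t$. First I would check the identity
$$(x^2+ux-1)(x^2-ux-1)=x^4-(u^2+2)x^2+1=x^4+1 .$$
Now put $e(x)=x^{2t}+ux^t-1$. Its reciprocal is
$$e^{*}(x)=x^{2t}e(x^{-1})=1+ux^t-x^{2t}=-(x^{2t}-ux^t-1).$$
Substituting $x\mapsto x^t$ in the identity then gives $-e(x)e^{*}(x)=x^{4t}+1=x^r+1$, which settles this direction. As expected, $e$ is monic with constant term $-1$.

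\textbf{Necessity.} Let $-e(x)e^{*}(x)=x^r+1$.

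Since $x\nmid x^r+1$, we have $e(0)\neq 0$. Hence $\deg e^{*}=\deg e=k$, and comparing degrees gives $r=2k$.

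Evaluating at $x=1$ and using $f^{*}(1)=f(1)$ (as in the proof of Lemma \ref{sddcc}), we get $-e(1)^2=2$. Since $2\neq 0$ in odd characteristic, $e(1)\neq 0$, so $-2=(2/e(1))^2$ is a square.

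For $4\mid r$, the natural step is to evaluate at $x=-1$. Since $e^{*}(-1)=(-1)^k e(-1)$, this yields $-(-1)^k e(-1)^2=2$. Combining with $-e(1)^2=2$ gives $(-1)^k=(e(1)/e(-1))^2$. So if $k$ is odd, then $-1$ must be a square.

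\textbf{Main obstacle.} The remaining case is $k$ odd with both $-1$ and $-2$ squares, that is, $q\equiv 1\Mod 8$. Here I would try a root-pairing argument. The roots of $x^r+1$ (with multiplicity) must split as $M\cup M^{-1}$, where $M$ is the multiset of roots of $e$, and these roots must be matched with $\mathbb{F}_q$-rational orbits.

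This step looks genuinely problematic, and I would verify the claim before trying to prove it. For $q=9$ and $r=2$, take $i$ with $i^2=-1$ and $c$ with $c^2=-i$; such $c$ exists because $i$ has order $4$ in the cyclic group $\mathbb{F}_9^{*}$ of order $8$. Then $e(x)=c(x-i)$ satisfies $e^{*}(x)=c(1-ix)$ and
$$-e(x)e^{*}(x)=-c^2(x-i)(1-ix)=-c^2(-i)(x^2+1)=x^2+1 .$$
So the divisibility $4\mid r$ appears to fail in general; it can only be forced when $q\not\equiv 1\Mod 8$, i.e.\ when $-1$ is not a square.

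I would therefore first confirm this computation. If it holds, I would amend the statement to: the identity has a solution iff $-2$ is a square and either $4\mid r$, or $r$ is even and $q\equiv 1\Mod 8$. The extra case would be handled by the factorization $x^2+1=(x-i)(x+i)$, substituting $x\mapsto x^{r/2}$ and scaling by $c$ as in the example above.
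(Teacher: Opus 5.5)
Your counterexample is correct, so the lemma is false as stated, and you were right not to try to prove it. Explicitly, in $\mathbb{F}_9=\mathbb{F}_3(i)$ with $i^2=-1$, take $c=1+i$, so $c^2=2i=-i$. Then $e(x)=c(x-i)=(1-i)+(1+i)x$ and $e^{*}(x)=(1+i)+(1-i)x$, so $e(x)e^{*}(x)=2+2x^2=-(x^2+1)$, even though $4\nmid 2$.

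The paper's sufficiency direction uses the same construction as yours, $x^{r/2}+\delta x^{r/4}-1$. Its necessity argument breaks at the step ``without loss of generality $e(x)$ is monic''. Replacing $e$ by $c^{-1}e$ multiplies $ee^{*}$ by $c^{-2}$. So for a monic $e_0$ with $e_0e_0^{*}=\kappa(x^r+1)$, one must rule out $-\kappa\in(\mathbb{F}_q^{*})^2$, not merely $\kappa=-1$. The paper's root computation does show $\kappa=-\gamma^{\mathrm{odd}}\neq -1$ when $r/2$ is odd. But this is compatible with $-\kappa$ being a square once $q\equiv 1\pmod 8$: in your example $e_0=x-i$, $\kappa=-i$, and $-\kappa=i$ is a square in $\mathbb{F}_9$. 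What the paper's argument actually establishes is the monic version, i.e.\ the existence of a monic $e$, necessarily with constant term $-1$. That is the version described in the sentence preceding the lemma. Even that argument, as written, covers only $p\nmid r$, because it uses a primitive $2r$-th root of unity.

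Your necessity argument is different from the paper's and is the right tool for the unrestricted statement. The degree count and the evaluations at $x=\pm 1$ give $r$ even, $-e(1)^2=2$, and $-(-1)^{r/2}e(-1)^2=2$. Hence $-2$ is a square, and when $r\equiv 2\pmod 4$ so are $2$ and $-1$, i.e.\ $q\equiv 1\pmod 8$. This avoids roots of unity entirely, so $p\mid r$ causes no trouble.

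Together with your two constructions, this is a complete proof of your amended statement: a solution exists if and only if $r$ is even, $-2$ is a square, and either $4\mid r$ or $q\equiv 1\pmod 8$. The one detail still to supply concerns the construction $e(x)=c(x^{r/2}-i)$, for which $-e(x)e^{*}(x)=ic^2(x^r+1)$. For general $q\equiv 1\pmod 8$ you need a $c$ with $c^2=-i$. Take an element $\zeta$ of order $8$ in $\mathbb{F}_q^{*}$; then $\zeta^2=\pm i$. Since $-1=i^2$ is a square, both $i$ and $-i$ are squares.
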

	\begin{proof}
		($\Leftarrow$) Let $\delta\in\mathbb{F}_q$ such that $\delta^2=-2$. For any $r$ that is divisible by 4, we can define $e(x)=x^{r/2}+\delta x^{r/4}-1$. Here, $e^{*}(x)=-x^{r/2}+\delta x^{r/4}+1$, and \begin{align*}
			-e(x)e^{*}(x)&=-(x^{r/2}+\delta x^{r/4}-1)(-x^{r/2}+\delta x^{r/4}+1)=(x^{r/2}+\delta x^{r/4}-1)(x^{r/2}-\delta x^{r/4}-1)\\
			&=x^r+(\delta-\delta)x^{3r/4}+(-\delta^2-2)x^{r/2}+(\delta-\delta)x^{r/4}+1=x^r+1.
		\end{align*}
		($\Rightarrow$) Since it is clear that $r$ must be even, we only need to consider the case where $r$ is even, but not divisible by 4. Using a similar idea as in \cite{Boripan2021}, let $\gamma$ be the primitive $2r$-th root of unity in some extension field of $\mathbb{F}_q$ (i.e. $\gamma^{2r}=1$ and $\gamma^r=-1$). Since $x^{2r}-1=(x^r-1)(x^r+1)$, $x^{2r}-1=(x-\gamma)(x-\gamma^2)\cdots(x-\gamma^{2r}),$ and $x^r-1=(x-\gamma^2)(x-\gamma^4)\cdots(x-\gamma^{2r}),$ we obtain $$x^r+1=(x-\gamma)(x-\gamma^3)\cdots (x-\gamma^{2r-1}).$$ Note that $(x-\gamma)^{*}=-\gamma x+1=-\gamma(x-\gamma^{2r-1})$, $(x-\gamma^3)^{*}=-\gamma^3 x+1=-\gamma^3(x-\gamma^{2r-3})$, $\cdots$, and $(x-\gamma^{r-1})^{*}=-\gamma^{r-1} x+1=-\gamma^{r-1}(x-\gamma^{r+1})$.
		
		Suppose that $e(x)$ is a polynomial over $\mathbb{F}_q$ such that $e(x)e^{*}(x)=\kappa (x^r+1)$, for some $\kappa\in\mathbb{F}_q$. Without loss of generality, if we assume that $e(x)$ is monic, then $\kappa$ is the constant term of $e(x)$ and $e(x)=e_1(x)e_2(x)\cdots e_{r/2}(x)$, where $e_i(x)$ is one of $x-\gamma^{2i-1}$ or $x-\gamma^{2(r-i)+1}$, for $i=1,2,\cdots,r/2$. Thus, the constant term of $e(x)$ is $\kappa=(-\gamma^{e_1})(-\gamma^{e_2})\cdots(-\gamma^{e_{r/2}})$, for some odd positive integers $e_1,e_2,\cdots,e_{r/2}$. Since $r$ is even but not divisible by 4, $r/2$ is odd, and $\kappa=-\gamma^{e_1+e_2+\cdots+e_{r/2}}$. Here, $e_1+e_2+\cdots+e_{r/2}$ is odd and therefore $\kappa$ cannot be equal to $-1$. Hence, there does not exist a polynomial $e(x)\in\mathbb{F}_q[x]$ such that $-e(x)e^{*}(x)=x^r+1$.
	\end{proof}
	Note that $e(x)=x^{r/2}-\delta x^{r/4}-1$ also satisfies $-e(x)e^{*}(x)=x^r+1$. Thus, for any $r$ that is divisible by 4, we always have $e(x)=x^{r/2}\pm \delta x^{r/4}-1$ as solutions of $-e(x)e^{*}(x)=x^r+1$ in $\mathbb{F}_q[x]$, provided that $\delta^2=-2$ in $\mathbb{F}_q$. Furthermore, in the following theorem we provide a characterization for the existence of self-dual double cyclic codes of length $(r,2r)$ and $(2r,r)$.
	\begin{theorem}\label{caser2r}
		Let $\mathbb{F}_q$ be a finite field of odd characteristic. There exists a self-dual double cyclic code of length $(r,2r)$ and $(2r,r)$ over $\mathbb{F}_q$ if and only if these three conditions are satisfied:
		\begin{itemize}
			\item $-2$ is a square in $\mathbb{F}_q$, i.e. there exists $\delta\in\mathbb{F}_q$ such that $\delta^2=-2$,
			\item $r$ is divisible by 4, and
			\item there exists a nonconstant polynomial $f(x)\in\mathbb{F}_q[x]$ such that $f(x)f^{*}(x)$ is a factor of $x^r-1$ over $\mathbb{F}_q$.
		\end{itemize}
	\end{theorem}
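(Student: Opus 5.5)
Both directions of Theorem \ref{caser2r} go through Theorem \ref{charsddc}. I treat the length $(r,2r)$ and note the symmetric changes for $(2r,r)$. Since $q$ is odd, Theorem \ref{existsep} shows that any self-dual double cyclic code here is nonseparable. By Theorem \ref{charsddc} we need $d$, $\beta$, $a$ with
\[
d(x)d^{*}(x)\beta^{*}(x)=x^r-1 \quad\text{and}\quad a(x)a^{*}(x)\beta(x)=x^{2r}-1,
\]
and we may then take $\lambda=1$.

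\emph{Necessity.} Lemma \ref{sddcc2} already gives that $-2$ is a square. As in its proof, $\beta^{*}=-\beta$, and dividing the two conditions gives
\[
x^r+1=-\frac{a(x)a^{*}(x)}{d(x)d^{*}(x)}.
\]
This is the step I expect to be the main obstacle: I must show that $e=a/d$ can be arranged to be a polynomial with $-ee^{*}=x^r+1$. Then Lemma \ref{charnega} forces $4\mid r$.

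The plan for this step is to work prime by prime.
\begin{itemize}
\item Factors of $x^r+1$ are coprime to $x^r-1$, because the characteristic is odd. So every irreducible factor $\pi$ of $x^r+1$ divides $aa^{*}$ but not $dd^{*}$.
\item Conversely, every irreducible factor of $dd^{*}$ divides $x^r-1$. It must therefore appear in $aa^{*}$ with the same multiplicity, and I can pair these factors off to write $a=d\,e\,u$ with $u$ a unit.
\item The risk is that the factorization of $aa^{*}$ distributes these primes differently between $a$ and $a^{*}$ than between $d$ and $d^{*}$. If that happens, I would instead redefine $e$ directly as the product of the factors of $a$ that are coprime to $x^r-1$, times a constant. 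Then $ee^{*}$ is exactly the part of $aa^{*}$ coprime to $x^r-1$, which equals $c(x^r+1)$ for some constant $c$.
\item To pin down $c=-1$, evaluate at $x=1$ after cancelling the powers of $x-1$, as in Lemmas \ref{sddcc} and \ref{sddcc2}. If the constant is awkward, I fall back on the root-of-unity argument of Lemma \ref{charnega}: for $r\equiv 2 \Mod{4}$, show directly that $-1$ is not a product of $r/2$ odd powers of $\gamma$ times the appropriate scalar coming from $d$.
\end{itemize}

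Finally, $\beta$ is a nonconstant divisor of $x^r-1$ with $d\,d^{*}\beta^{*}=x^r-1$. Since the code is nonseparable, the proof of Theorem \ref{charsddc} shows $\beta$ is not a unit, which forces $\deg(dd^{*})<r$. Also $d$ is nonconstant: if $d$ were constant, then $\beta$ would be associate to $x^r-1$, and the second condition would give $aa^{*}$ associate to $x^r+1$, so $b=x^r-1\equiv 0$. In that case I would need a separate argument, or a reduction via the symmetric roles of $a$ and $d$, to get a nonconstant $f$ with $ff^{*}\mid x^r-1$. Taking $f=d$ gives the third condition.

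\emph{Sufficiency.} Let $f$ be nonconstant with $x^r-1=f f^{*} p$. As in Lemma \ref{resgcd}, $p^{*}=-p$. Take $e=x^{r/2}+\delta x^{r/4}-1$ from Lemma \ref{charnega}, so that $-ee^{*}=x^r+1$. For length $(r,2r)$ set $d=f$, $\beta=-p^{*}$ (equivalently $b=-fp$), $l=f$ and $a=fe$. Then $dd^{*}\beta^{*}=ff^{*}p=x^r-1$. Using $\beta=p$, we get
\[
aa^{*}\beta=ff^{*}p\,ee^{*}=-(x^r-1)(x^r+1)=-(x^{2r}-1).
\]
The sign is fixed by replacing $e$ with $\delta' e$ where $\delta'^2=-1$, or alternatively by taking $b=fp$. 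Finite fields where $-2$ is a square but $-1$ is not exist, so I will adjust signs through the choice of $b$ versus $-b$ rather than through $\epsilon$, rechecking the reciprocal identities. $\lambda=1$ satisfies the third condition of Theorem \ref{charsddc}. For $(2r,r)$ the roles swap: the first condition becomes $dd^{*}\beta^{*}=x^{2r}-1$, so take $d=fe$, $a=f$ and $\beta$ built from $p$, and verify in the same way.
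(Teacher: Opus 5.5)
Your outline follows the paper's proof: Lemma~\ref{sddcc2} gives that $-2$ is a square, an $e$ with $-ee^{*}=x^r+1$ fed into Lemma~\ref{charnega} gives $4\mid r$, $f=d$ gives the third condition, and $l=f$, $a=fe$, $b=\pm fp$ gives sufficiency. Your extraction of $e$ is cleaner than the paper's: you take the part of $a$ coprime to $x^r-1$ and rescale it using $e(1)^2=2c$ and $\delta^2=-2$. The genuine gap is the case you leave open, $d$ constant. No separate argument can close it, because the necessity direction is false exactly there.

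Over $\mathbb{F}_{17}$, take $C=\langle(3+12x,\,13+x)\rangle$ of length $(2,4)$. The vectors $(3,12\,|\,13,1,0,0)$, $(12,3\,|\,0,13,1,0)$ and $(3,12\,|\,0,0,13,1)$ are independent, pairwise orthogonal and self-orthogonal under the ordinary inner product. $C$ is $3$-dimensional, because the cubic $(x^4-1)/(x-4)$ annihilates the generator. So $C$ is a self-dual double cyclic code; swapping the blocks gives one of length $(4,2)$. In $C$ the only element of the form $(p,0)$ is $0$, so $b=\pm(x^2-1)$ and $d=\gcd(b,l)=1$. This is precisely your unresolved case, and indeed no nonconstant $f$ satisfies $ff^{*}\mid x^2-1=(x-1)(x+1)$; also $4\nmid 2$. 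The paper's proof has the same hole: it calls the third condition ``clearly satisfied'' by taking $f=d$ (resp.\ $f=a$), without checking that this polynomial is nonconstant.

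The step to $4\mid r$ also fails, in your proposal and in the paper alike. The proof of Lemma~\ref{charnega} takes $e$ monic ``without loss of generality''. But replacing $e$ by $ue$ multiplies $ee^{*}$ by $u^2$, so one only needs $-\kappa$ to be a square, not $\kappa=-1$. Over $\mathbb{F}_{17}$, $e_0=x-4$ gives $e_0e_0^{*}=-4(x^2+1)$, and $e=8e_0=8x+2$ satisfies $-ee^{*}=x^2+1$ with $r=2$. Your root-of-unity fallback breaks on the same square scalar.

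For sufficiency, which uses the paper's construction through Theorem~\ref{charsddc}, your signs need repair:
\begin{itemize}
\item $\beta=-p^{*}$ is just $p$ (since $p^{*}=-p$), and it gives $dd^{*}\beta^{*}=-(x^r-1)$.
\item $b=-fp$, i.e.\ $\beta=-p$, gives both $dd^{*}\beta^{*}=ff^{*}p=x^r-1$ and $aa^{*}\beta=-(x^r-1)ee^{*}=x^{2r}-1$, so no $\epsilon$ is needed.
\item Your alternative $b=fp$ violates the first condition.
\end{itemize}
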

	\begin{proof}
		($\Rightarrow$) As in the proof of Lemma \ref{sddcc2}, the existence of self-dual double cyclic codes of length $(r,2r)$ and $(2r,r)$ implies that $x^r+1=-e(x)e^{*}(x)$, for some $e(x)\in\mathbb{F}_q[x]$. From Lemma \ref{charnega}, this is equivalent to the first two conditions. The third condition is clearly satisfied since in the case of length $(r,2r)$, we have $d(x)d^{*}(x)\beta^{*}(x)=x^r-1$, where $d(x)=\gcd(b(x),l(x))$, and in the case of length $(2r,r)$, we have $a(x)a^{*}(x)=x^r-1$.
		
		($\Leftarrow$) For the case of length $(r,2r)$, if the first two conditions are satisfied, then as described in the proof of Lemma \ref{charnega}, there exists $e(x)=x^{r/2}+\delta x^{r/4}-1$, where $\delta\in\mathbb{F}_q$ such that $\delta^2=-2$. Moreover, if the third condition is satisfied, we can choose $a(x)=f(x)e(x)$, $b(x)=\frac{x^r-1}{f^{*}(x)}$, and $l(x)=f(x)$. It can be verified that $\{(b(x),0),(l(x),a(x))\}$ is an orthogonal set. Therefore, the code generated by $(b(x),0)$ and $(l(x),a(x))$ is a self-dual double cyclic code of length $(r,2r)$ over $\mathbb{F}_q$.
		
		The proof is similar for the case of length $(2r,r)$ by choosing $a(x)=f(x)$, $b(x)=\frac{x^r-1}{f^{*}(x)}e(x)$, and $l(x)=f(x)e(x)$. By verifying that $\{(b(x),0),(l(x),a(x))\}$ is an orthogonal set, the code generated by that set is a self-dual double cyclic code of length $(2r,r)$ over $\mathbb{F}_q$.
	\end{proof}
	In other words, we can construct self-dual double cyclic codes of length $(r,2r)$ and $(2r,r)$ by choosing generating elements $(b(x),0)$ and $(l(x),a(x))$ as described in the proof of Theorem \ref{caser2r}. We provide an example for each case over $\mathbb{F}_3$.
	\begin{example}\label{exr2r}
		Here $q=3$ and $r=8$. We construct a self-dual double cyclic code of length $(r,2r)=(8,16)$ and $(2r,r)=(16,8)$ over $\mathbb{F}_3$. In $\mathbb{F}_3[x]$, $x^8-1=(1+x)(2+x)(1+x^2)(2+x+x^2)(2+2x+x^2)$ and $x^8+1=(2+x^2+x^4)(2+2x^2+x^4)$. Choose $f(x)=2+x+x^2$ and $e(x)=2+x^2+x^4$.
		For the length $(8,16)$,
		\begin{itemize}
			\item $b(x)=\frac{x^8-1}{f^{*}(x)}=2+x+x^2+x^4+2x^5+2x^6$,
			\item $l(x)=f(x)=2+x+x^2$, and
			\item $a(x)=f(x)e(x)=1+2x+x^2+x^3+x^5+x^6$.
		\end{itemize}
		Then we obtain a self-dual double cyclic code of length $(8,16)$ over $\mathbb{F}_3$ generated by $$\{(2+x+x^2+x^4+2x^5+2x^6,0),(2+x+x^2,1+2x+x^2+x^3+x^5+x^6)\}.$$
		For the length $(16,8)$,
		\begin{itemize}
			\item $b(x)=\frac{x^8-1}{f^{*}(x)}e(x)=1+2x+x^2+x^3+2x^4+2x^5+2x^7+2x^9+2x^{10}$,
			\item $l(x)=f(x)e(x)=1+2x+x^2+x^3+x^5+x^6$, and
			\item $a(x)=f(x)=2+x+x^2$.
		\end{itemize}
		Then we obtain a self-dual double cyclic code of length $(16,8)$ over $\mathbb{F}_3$ generated by $$\{(1+2x+x^2+x^3+2x^4+2x^5+2x^7+2x^9+2x^{10},0),(1+2x+x^2+x^3+x^5+x^6,2+x+x^2)\}.$$
		Moreover, both codes have minimum distance 6.
	\end{example}
	Note that in Example \ref{exr2r}, we can obtain other self-dual codes of length 24 over $\mathbb{F}_3$ by choosing different $f(x)$ and $e(x)$, for example $f(x)=2+2x+x^2$ and $e(x)=2+2x^2+x^4$. Again, the existence depends on the factorization of $x^r-1$ and $x^r+1$ over $\mathbb{F}_q$.
	
	\section{Self-dual double cyclic codes of length $(r,s)$ over $\mathbb{F}_{q}$, where $\gcd(r,s)=1$}
	In this section, we examine the existence of self-dual double cyclic codes over $\mathbb{F}_q$ of another specific length. In the previous two sections, we examined the cases where $s$ is a multiple of $r$, or the other way around. We now consider the case where $r$ and $s$ are relatively prime, i.e. $\gcd(r,s)=1$.
	
	Recall that from Theorem \ref{charsddc}, we need to find polynomials $d(x)$, $\beta(x)$, and $a(x)$ over $\mathbb{F}_q$ such that $d(x)d^{*}(x)\beta^{*}(x)=x^r-1$ and $a(x)a^{*}(x)\beta(x)=x^s-1$. Since $\beta^{*}(x)=-\beta(x)$, $\beta(x)$ is a common divisor of $x^r-1$ and $x^s-1$. Hence, $\beta(x)$ divides $\gcd(x^r-1,x^s-1)=x^{\gcd(r,s)}-1$. In the case where $\gcd(r,s)=1$, $\beta(x)$ divides $x-1$. If $\beta(x)$ is a nonzero constant polynomial, then the obtained code is separable. To obtain a nonseparable one, the only possibility is $\beta(x)=\alpha(x-1)$, for some nonzero $\alpha\in\mathbb{F}_q$, in which those two conditions become
	$$-\alpha d(x)d^{*}(x)=1+x+\cdots+x^{r-1}$$
	and $$\alpha a(x)a^{*}(x)=1+x+\cdots+x^{s-1}.$$
	
	It is clear that in this case both $r$ and $s$ must be odd. We need to find $r$ and $s$ such that $\gcd(r,s)=1$ and the factorizations of $1+x+\cdots+x^{r-1}$ and $1+x+\cdots+x^{s-1}$ satisfy the two conditions described above. Then we can construct a self-dual double cyclic code of length $(r,s)$ over $\mathbb{F}_q$ generated by $\{(d(x)(x-1),0),(d(x),a(x))\}$. We provide some explicit examples over various finite fields $\mathbb{F}_q$.
	
	\begin{example}\label{ex3coprime}
		For $q=3$, we can work on the case $r=9$ and $s=11$, because
		\begin{align*}
			1+x+\cdots+x^8&=(1+2x+2x^3+x^4)(1+2x+2x^3+x^4),\\
			1+x+\cdots+x^{10}&=(2+2x+x^2+2x^3+x^5) (2+x^2+2x^3+x^4+x^5).
		\end{align*}
		Here $\alpha=-1=2$, and we can choose $d(x)=1+2x+2x^3+x^4$ and $a(x)=2+2x+x^2+2x^3+x^5$. Then we obtain a self-dual double cyclic code generated by $$(d(x)2(x-1),0)=(1+x+x^2+2x^3+2x^4+2x^5,0)$$ and $$(d(x),a(x))=(1+2x+2x^3+x^4,2+2x+x^2+2x^3+x^5).$$
		Note that $a(x)=2+x^2+2x^3+x^4+x^5$ also satisfies the condition. In this case, we obtain another self-dual double cyclic code, which is
		$$\langle (1+x+x^2+2x^3+2x^4+2x^5,0),(1+2x+2x^3+x^4,2+x^2+2x^3+x^4+x^5)\rangle.$$
		Moreover, both codes have minimum distance 3.
	\end{example}
	
	\begin{example}\label{ex4coprime}
		For $q=4$, we can work on the case $r=7$ and $s=9$, because
		\begin{align*}
			1+x+\cdots+x^6&=(1+x+x^3)(1+x^2+x^3),\\
			1+x+\cdots+x^8&=(a+x)(a^2+x)(a+x^3)(a^2+x^3)\\
			&=(1+a x+a^2 x^3+x^4)(1+a^2 x+a x^3+x^4).
		\end{align*}
		Here $\alpha=-1=1$, and we can choose $d(x)=1+x+x^3$ and $a(x)=1+a x+a^2 x^3+x^4$. Then we obtain a self-dual double cyclic code generated by $(1+x^2+x^3+x^4,0)$ and $(1+x+x^3,1+a x+a^2 x^3+x^4).$ The minimum distance of this code is 3.
	\end{example}
	
	\begin{example}\label{ex7coprime}
		For $q=7$, we can work on the case $r=7$ and $s=9$, because
		\begin{align*}
			1+x+\cdots+x^6&=(6+x)^6=6(6+x)^3(1+6x)^3\\
			&=6(6+3x+4x^2+x^3)(1+4x+3x^2+6x^3),\\
			1+x+\cdots+x^8&=(3+x)(5+x)(3+x^3)(5+x^3)\\
			&=(3+x)(1+3x)(5+x^3) (1+5x^3) \\
			&= (1+3x+5x^3+x^4)(1+5x+3x^3+x^4).
		\end{align*}
		Here $\alpha=1$, and we can choose $d(x)=6+3x+4x^2+x^3$ and $a(x)=1+3x+5x^3+x^4$. Then we obtain a self-dual double cyclic code generated by $(1+3x+6x^2+3x^3+x^4,0)$ and $(6+3x+4x^2+x^3,1+3x+5x^3+x^4).$ The minimum distance of this code is 3.
	\end{example}
	Note that in all the examples provided in this section, by swapping the partition, we can also construct double cyclic codes of length $(s,r)$ over $\mathbb{F}_q$. The existence of polynomials $d(x)$ and $a(x)$ such that $-\alpha d(x)d^{*}(x)=1+x+\cdots+x^{r-1}$ and $\alpha a(x)a^{*}(x)=1+x+\cdots+x^{s-1}$, for some $\alpha\in\mathbb{F}_q$, depends on the factorization of $1+x+\cdots+x^{r-1}$ and $1+x+\cdots+x^{s-1}$ over $\mathbb{F}_q$. Not every pair of relatively prime odd numbers $(r,s)$ satisfies these conditions.
	
	\section{Concluding remark}
	In this article, we have investigated some properties of self-dual double cyclic codes over a finite field $\mathbb{F}_q$. 
	Among our main contributions are the necessary and sufficient conditions for the polynomials in the generating set of self-dual double cyclic codes. This characterization works for any length $(r,s)$ and over any finite field $\mathbb{F}_q$.
	
	We also observe some special cases of self-dual double cyclic codes of a specific length. In each case, we characterize the existence of such codes, which mainly depends on the polynomial factorization over finite fields. We also find that in some cases, self-dual double cyclic codes only exist over some specific finite fields. Based on the characterizations, we formalize an explicit construction method and obtain some explicit examples. Moreover, some of the examples are optimal self-dual codes.
	
	We also find that some cases are related to other classes of codes. Self-dual double cyclic codes of length $(r,r)$ are equivalent to self-dual 2-quasi-cyclic codes, while self-dual double cyclic codes of length $(r,2r)$ and $(2r,r)$ are related to self-dual negacyclic codes. Therefore, it is theoretically possible to apply our results to construct other classes of codes.
	
	\section*{List of Abbreviations}
	Not applicable.
	
	\section*{Declarations}
	\subsection*{Acknowledgements}
	The authors thank Prof. Akihiro Munemasa from Tohoku University, Japan, for his useful feedback. The authors are also grateful to the reviewers for their thorough reading of the manuscript
	and their helpful comments.
	
	\subsection*{Authors' Contributions}
	R.A. performed the main technical work, including result development, examples, computations, and drafting the manuscript.
	A.B. contributed to conceptualization, theoretical development, result validation, and improvement of exposition.
	D.S. contributed to conceptual and theoretical work, manuscript revision, and primary supervision and project management.
	All authors approved the final manuscript.
	
	\subsection*{Funding}
	This research was supported by the Institut Teknologi Bandung (ITB) and the Ministry of Higher Education, Science, and Technology
	(\emph{Kementerian Pendidikan Tinggi, Sains, dan Teknologi (Kemdiktisaintek)}),
	Republic of Indonesia. The first author was also supported by the Universitas Sanata Dharma (USD) for his PhD studies at ITB. 
	
	\subsection*{Competing interest}
	The authors declare no competing interests.
	
	\subsection*{Data availability}
	No datasets were generated or analysed during the current study.
	
	\subsection*{Material availability}
	Not applicable.
	\bibliographystyle{amsplain}

\end{document}